\newtheorem{theorem}{Theorem}
\newtheorem{lemma}[theorem]{Lemma}
\newtheorem{definition}[theorem]{Definition}
\newtheorem{remark}[theorem]{Remark}
\newcommand{\Wlog}{W.l.o.g.\xspace}
\newcommand{\mywlog}{w.l.o.g.\xspace}
\newcommand{\Zrun}{$\Z$-run\xspace}
\newcommand{\Zruns}{$\Z$-runs\xspace}
\newcommand{\Fair}{Fair\xspace}
\newcommand{\fair}{fair\xspace}
\newcommand{\balrun}{balanced\xspace}
\newcommand{\balrunable}{balanceable\xspace}
\newcommand{\intcone}[1]{\text{\sc IntCone}(#1)}
\newcommand{\diff}[2]{\Delta_{#1}(#2)}
\newcommand{\wrprod}[2]{#1\wr #2}
\newcommand{\symg}[1]{S_{#1}}
\newcommand{\altg}[1]{A_{#1}}
\newcommand{\cycg}[1]{Z_{#1}}
\newcommand{\tg}[1]{I_{#1}}
\newcommand{\const}[1]{\overline{#1}}
\newcommand{\absv}[1]{|#1|}
\newcommand{\OO}{{\cal O}}
\newcommand{\Th}{\Theta}
\newcommand{\Npos}{\N_+}
\newcommand{\svass}[1]{#1-{\sc Vass}}
\newcommand{\vect}[1]{\mathbf{#1}}
\newcommand{\vr}[1]{\vect{#1}}
\newcommand{\vas}{\text{\sc vas}\xspace}
\newcommand{\vass}{\text{\sc vass}\xspace}
\newcommand{\parvass}[1]{{$#1$-\vass}\xspace}
\newcommand{\runarrow}{\longrightarrow}
\newcommand{\norm}[1]{||#1||}
\newcommand{\NP}{\text{\sc NP}\xspace}
\newcommand{\NL}{\text{\sc NL}\xspace}
\newcommand{\tower}{\text{\sc Tower}\xspace}
\newcommand{\FF}[1]{\mathcal{F}_{#1}}
\newcommand{\pspace}{\text{\sc PSpace}\xspace}
\newcommand{\pair}[2]{\langle #1,#2 \rangle}
\newcommand{\tuple}[1]{\langle #1 \rangle}
\newcommand{\wrtuple}[2]{(#1)\wr #2}
\newcommand{\trans}[1]{\stackrel{#1}{\longrightarrow}}
\newcommand{\reachprob}[1]{$#1$-\vass-{\sc{Reach}}}
\newcommand{\vasreachprob}[1]{$#1$-\vas-{\sc{Reach}}}
\newcommand{\id}[1]{\textrm{Id}_{#1}}
\newcommand{\size}[1]{|#1|}
\newcommand{\Z}{\mathbb{Z}}
\newcommand{\N}{\mathbb{N}}
\newcommand{\B}{\mathcal{B}}
\newcommand{\V}{\mathcal{V}}
\newcommand{\rev}[1]{{#1}^{\text{\tiny rev}}}
\newcommand{\set}[1]{\left\{#1\right\}}
\newcommand{\setof}[2]{\left\{#1 \, \middle\vert \, #2\right\}}
\newcommand{\subseteqfin}{\subseteq_\text{fin}}
\newcommand{\setfromto}[2]{\set{#1,\ldots , #2}}
\newcommand{\setto}[1]{[#1]}
\newtheorem{example}{Example}
\newcommand{\para}[1]{\subparagraph*{\rm \bf #1.}}
\journal{Information and Computation}
\begin{document}

\begin{frontmatter}



\title{Reachability in Symmetric VASS}


\author[uw]{{\L}ukasz Kami{\'n}ski
\fnref{s1}
}
\ead{l.kaminski5@uw.edu.pl}
\author[uw]{S{\l}awomir Lasota\fnref{s2}}
\ead{sl@mimuw.edu.pl}


\fntext[s1]{Partially supported by the NCN grant 2021/41/B/ST6/00535.}
\fntext[s2]{Partially supported by the ERC grant INFSYS, agreement no. 950398, and the NCN grant 2021/41/B/ST6/00535.}

\affiliation[uw]{organization={University of Warsaw},
             country={Poland}}

\begin{abstract}
  We investigate the reachability problem in \emph{symmetric} vector addition systems with states (\vass),
  where transitions are invariant under a group of permutations of coordinates.
  One extremal case, the trivial groups, yields general \vass.
  In another extremal case, the symmetric groups,
  we show that the reachability problem can be solved in \pspace, 
  regardless of the dimension of input \vass
  (to be contrasted with Ackermannian complexity in general \vass).
  We also consider other groups, in particular alternating and cyclic ones.
  Furthermore, motivated by the open status of the reachability problem in data \vass,
  we estimate the gain in complexity when the group arises as a combination of the trivial and symmetric groups. 
\end{abstract}



\begin{keyword}

vector addition systems \sep Petri nets \sep reachability problem \sep symmetry \sep permutation group



\end{keyword}

\end{frontmatter}




\section{Introduction}

Petri nets, equivalently presentable as vector addition systems with states (\vass), 
are an established model of concurrency with widespread applications.
The central algorithmic problem for this model
is the \emph{reachability problem} which asks whether from a given initial configuration 
there exists a sequence of valid execution steps reaching a given final configuration.
The decidability of the problem was established in 1981 by Mayr \cite{Mayr81}, 
and subsequently improved by Kosaraju \cite{Kosaraju82} and Lambert \cite{Lambert92}.
The exponential space lower bound was shown by Lipton already in 1976~\cite{Lipton76}.
For around 40 years these were the only known complexity bounds, 
and the complexity of the problem remained for a long time
as one of the hardest open questions in the verification of concurrent systems.
Only in the last few years we have seen a sudden progress, namely
an Ackermannian upper bound~\cite{LS19},
a breakthrough non-elementary lower bound
~\cite{CLLLM19,jacm}, and
finally the matching Ackermannian lower bound
(independently \cite{CO21} and \cite{L21}). 

All the above-mentioned bounds are more fine-grained and apply to 
the problem pa\-ra\-me\-trised by the dimension,
where the input is restricted to
$d$-dimensional \vass (\parvass d) for fixed $d \in \N$.
Both the upper and lower bounds have been subsequently further improved \cite{Las22,COLO23,FYZ24},
and currently the reachability problem for \parvass{d} is known to belong to
the complexity class $\FF d$, and
for \parvass{(2d+3)} it is known to be $\FF d$-hard.
Here $\FF d$ denotes the $d$th level of the hierarchy
of complexity classes
corresponding to Grzegorczyk's fast growing function hierarchy.
For instance, the currently best lower bound of \cite{COLO23} shows the reachability problem for 
\parvass 9 to be hard for the class $\FF 3 = \tower$. 
Subsequently, \cite{CO22} proved the same complexity bound even for \parvass 8.
Summing up, for sufficiently large dimensions the lower bound
is prohibitive.

\para{Symmetric \vass}

In this paper we investigate subclasses of \parvass d which are \emph{symmetric},
meaning their sets of transitions are invariant under certain permutations of the coordinates $\{1,\ldots, d\}$.
Our study is parametric in the choice of a group of permutations $G \leq \symg d$
of coordinates, and our objective is to analyze the gain
in complexity to be
achieved when the input to the reachability problem is restricted to 
\parvass G, the subclass of those \parvass d whose sets of transitions are invariant under
permutations $\sigma \in G$.
In one extreme case, when $G$ is the trivial permutation group $\tg d$ containing just the identity
permutation, \parvass G are just general \parvass d.
In another extreme case, when $G=\symg d$ is the symmetric group of degree $d$,
transitions of \parvass G are invariant under all permutations of coordinates.
The two extreme cases may be combined. 
For instance, one can consider the permutation group
$G = \wrprod {\tg n} {\symg d} \leq \symg {nd}$ of degree $nd$
containing all permutations that are identity inside each of the $n$-element blocks:
\[
\{1, \ldots, n\}, \quad
\{n+1, \ldots, 2n\}, \quad
\ldots, \quad
\{n(d-1) + 1, \ldots, nd\},
\]
but permute arbitrarily the whole blocks.

Our principal motivation is to exploit symmetry of a model in order to
lower the prohibitively high complexity of the reachability problem in the general case.
Another motivation comes from the model of \emph{data \vass} \cite{Lasota16,LNORW07,RF11}
 extending plain \vass with
data.
In terms of Petri nets, the extension allows tokens to carry data value coming from
some infinite countable set, and allows transitions to test (dis)equalities between
data values of involved tokens.
When $G = \wrprod {\tg n} {\symg d} \leq \symg {nd}$, the \parvass G
are exactly data \vass of dimension $n$, where the set of data values is 
restricted to be finite%
\footnote{
The other way around, the $n$-dimensional 
data \vass can be defined as \parvass G, where $G = \wrprod{\tg n}{\symg \infty}$
combines the trivial group $\tg n$ with the symmetric group $\symg \infty$ of infinite countable degree.
},
of size $d$.
It is not known if the reachability problem is decidable for data \vass, and our current study
may shed some more light on this hard open problem.

\para{Contribution}
Intuitively speaking,
the larger the group $G$, the lower the complexity of the reachability problem, since
whenever $G\leq H$, every \parvass H is automatically a \parvass G.
The main contribution of this paper is twofold.
First, we concentrate on the potentially easiest cases, namely symmetric groups $\symg d$, and 
discover a huge complexity drop:
regardless of dimension $d\geq 2$, the reachability problem is \pspace-complete
for \vass invariant under symmetric groups.%
\footnote{
In dimension 1 the permutation group $G$ is irrelevant, and the problem is \NL-complete 
or \NP-complete, 
for input represented in unary or binary, respectively \cite{HKOW09}.
}
On the other hand, the \pspace lower bound
holds even in the stateless setting for $d \geq 3$.
We also prove the same complexity for another potentially easy case, namely for the alternating groups
$\altg d$:
again, regardless of dimension $d\geq 3$, the reachability problem is \pspace-complete
for \vass invariant under alternating groups.
Both the \pspace decision procedures are designed for a fixed dimension $d$, but work equally
well when $d$ is part of input.
We thus get uniform \pspace upper bound for \vass invariant under symmetric (resp.~alternating) groups.
At the other side, we prove that cyclic groups $\cycg d$ are hard, namely the reachability problem
for \parvass d invariant under 
$\cycg d$ is as hard as in \parvass{\Th(d)}.

The case of trivial group $G=\tg d$ coincides with general \parvass d.
As our second main contribution we investigate combinations of symmetric groups and trivial groups,
providing one positive and one negative result.
On one hand, when $G = \wrprod {\tg n} {\symg d} \leq \symg {nd}$
(as discussed above) and $n\geq 2$, the reachability problem
is as hard as in case of \parvass {(n-1)d}.
From the perspective of data \vass
this may be interpreted as a bad news: the complexity grows with the increasing
number $d$ of data values.
On the other hand, we investigate another combination of the two groups, namely
$G = \wrprod {\symg d} {\tg n} \leq \symg {dn}$ containing all permutations that
independently and arbitrarily permute each of $d$-element blocks: 
\[
\{1, \ldots, d\}, \quad
\{d+1, \ldots, 2d\}, \quad
\ldots, \quad
\{d(n-1) + 1, \ldots, dn\},
\]
but preserve each of the blocks.
In this case we provide an exponential-time reduction
to \parvass n, i.e., this time
the complexity is independent of the degree $d$ of the symmetric group.
The complexity in this case is thus (significantly) lower than for $\wrprod {\tg n} {\symg d}$.
This is in agreement with 
$\wrprod {\tg n} {\symg d}$ being a subgroup of $\wrprod {\symg d} {\tg n}$,
up to isomorphism of permutation groups.
Using similar methods we also prove that the reachability problem for a subclass
of data \vass invariant under independent
data permutations in each dimension
(definable as  \parvass{(\wrprod{\symg \infty}{\tg n})} in our setting)
is decidable.

\medskip

\para{Overview}
We start in Section \ref{sec:prelim} with preliminaries on \vass and permutation groups.
In Section \ref{sec:lowerbound} we prove \pspace-hardness of the reachability problem
for \parvass G, for any $G \leq \symg d$ of dimension $d \geq 2$.
In Section \ref{sec:transitive} we introduce and study \emph{fair} permutation groups,
one of key technical concepts of the paper, and
prove a general \pspace upper bound for such groups.
Then in Section \ref{sec:symg} we prove that symmetric groups are fair,
thus obtaining \pspace upper bound for \parvass {\symg d}.
In Section \ref{sec:comb} we study combinations of symmetric and trivial groups,
proving both positive and negative complexity results.
In Section \ref{sec:data} we apply the developed techniques to data \vass,
proving decidability of the reachability problem for a subclass of data \vass.
In Section \ref{sec:other} we strengthen the \pspace upper bound to alternating groups
by showing that they are also fair, 
and prove hardness of cyclic groups.
We conclude in Section \ref{sec:final}.

This article is a extend and improved version of the conference paper \cite{KL25}.


\section{Symmetric \vass}
\label{sec:prelim}

As usual, let $\Z, \N$ denote integers and nonnegative integers, respectively.
The value of the $i$th coordinate of a vector $\vr w\in\Z^d$ is written  as 
$\vr w(i)$, namely
$\vr w = (\vr w(1), \ldots, \vr w(d))$, i.e.,
we identify a vector with a function $\vr w \colon \setto d\to\Z$,
where $\setto d = \{1, \ldots, d\}$.
The \emph{norm} of a vector $\vr v \in \Z^d$ is  the maximum of absolute values of its coordinates,
i.e., $\norm{\vr v} = \max_{i \in \setto d} \absv{\vr v(i)}$.
Then the \emph{norm} of a set of vectors $X$ is defined as $\norm{X} = \max_{\vr v \in X} \norm{\vr v}$.
For $n\in\Z$, by $\const n$ we denote the constant vector $\const n = (n, \ldots, n) \in \Z^d$,
in the dimension $d$ to be always determined by the context.
For a finite set $X$,
by $\size X\in \N$ we denote its size.

\para{Vector addition systems with states}

For $d\in\Npos=\N\setminus \{0\}$,
we define $d$-dimensional \emph{vector addition systems with states}, denoted shortly as 
\parvass d, or simply \vass when the dimension $d$ is irrelevant or clear from the context.
A \parvass d $\V=(Q,T)$ consists of
a finite set $Q$ of states and a finite set of transitions $T\subseteq Q\times \Z^d \times Q$.
When $\size Q = 1$, $\V$ is called simply a \emph{vector addition system} (\vas).
A \emph{configuration} $c$ of $\V$ consists of a state $q\in Q$ and a nonnegative vector $\vr w\in\N^d$,
and is written as $c=q(\vr w)$.
A transition $t=(q,\vr v,q')$ induces \emph{steps} 
\begin{align} \label{eq:step}
q(\vr w) \trans{t} q'(\vr w')
\end{align}
between configurations, where $\vr w' = \vr w+\vr v$.
We refer to the vector $\vr v\in\Z^d$ as the \emph{effect} of the transition $(q,\vr v,q')$ 
or of an induced step.
The \emph{norm} of a transition $t = (q, \vr v, q')$ is defined as $\norm{t} := \norm{\vr v}$.
Similarly, for a configuration $c = q(\vr w)$, we define $\norm{c} := \norm{\vr w}$.
The \emph{norm} of a \vass $\V$ is the maximal norm of its transitions, i.e.,
$\norm{\V} := \max_{t \in T} \norm{t}$.

A \emph{run} $\pi$ in $\V$ is a sequence of steps with the proviso that the target configuration of every step matches
the source configuration of the next one:
\begin{align} \label{eq:path}
\pi \ = \ c_0 \trans{t_1} c_1 \trans{} \ldots \trans{t_n} c_n.
\end{align}
We say that the run $\pi$ is \emph{from} $c_0$ \emph{to} $c_n$, call $c_0, c_n$ the source and the target 
configuration of the run, respectively, and write $c_0\trans{\pi}c_n$.
When the source configuration $c_0$ is clear from the context, we may 
identify a run \eqref{eq:path} from $c_0$ with the sequence $t_1 \ldots t_n$ of transitions fired.
The \emph{effect} 
of a run is the sum of effects of all steps, and its \emph{length} is the number $n$ of steps.
We also write $c \trans{} c'$ if there is some run from $c$ to $c'$.
We often use the natural operation of concatenation of runs:
if $\alpha \colon c\trans{} c'$ and $\beta \colon c' \trans{} c''$ then $(\alpha ; \beta) \colon c \trans{} c''$, 
assuming that
the target configuration of the former run coincides with the source configuration of the latter one.
For convenience, we allow ourselves to drop this assumption and use the convention that 
the latter run $\beta$ is implicitly shifted so that its source matches the target of $\alpha$, as long as
the shifted $\beta$ is still a run.
Under this convention, we write $\alpha^n$ for $n$-fold concatenation of $\alpha$.

In the sequel we refer to the \emph{state graph} of a \vass $\V=(Q, T)$,
a directed graph $(Q, E)$
whose nodes are states $Q$, and edges $E\subseteq Q^2$ are those pairs $(q, q')$
for which there is a transition $(q, \vr v, q')\in T$, for some $\vr v\in\Z^d$.
This graph may contain self-loops.

\para{Reachability problem}

One of the most fundamental computational problems studied in the setting of \vass is
the \emph{reachability problem}
(for the sake of this paper we prefer to formulate the problem for a \emph{fixed} dimension $d$):

\begin{quote}
\reachprob d:\\
given a \parvass d $\V$ together with two configurations, source $s$ and target $t$,
one asks if $\V$ has a run from $s$ to $t$.
\end{quote}
Complexity of this problem in subclasses of \emph{symmetric} \vass, to be defined below, 
is the main topic studied in this paper.
An input to the problem is a triple $(\V, s, t)$.
For simplicity, 
such a triple $(\V, s, t)$, or a pair $(\V, s)$ with just the source state, we also call a \vass,
hoping that this does not lead to confusion.
We also use shorthands $\norm{\V, s} := \max\{\norm{\V}, \norm s\}$ and
$\norm{\V, s, t} := \max\{\norm{\V}, \norm s, \norm t\}$.

\para{Symmetric \vass}

Let $\symg d$ denote the symmetric group containing all permutations of $\setto d$.
Permutations of $\setto d$ act on vectors $\vr w\in\Z^d$ by permuting dimensions,
namely  $\sigma\in\symg n$ maps $\vr w$ to the vector $\sigma(\vr w)$ defined by
$\sigma(\vr w)(i) = \vr w(\sigma^{-1}(i))$, or equivalently $\sigma(\vr w)(\sigma(i)) = \vr w(i)$.
Treating a vector as a function $\vr w \colon \setto d \to \Z$, we may write
\[
\sigma(\vr w) = \vr w \circ \sigma^{-1}, \quad \text{ or equivalently } \quad \sigma(\vr w) \circ \sigma = \vr w.
\]
We naturally extend this action to transitions, $t=(q, \vr v, q') \longmapsto \sigma(t)=(q, \sigma(\vr v), q')$, 
and likewise to steps, runs, etc., as expected.
Given a group of permutations $G\leq \symg d$, by a \parvass G, we mean any \parvass d $\V = (Q, T)$
whose transitions are invariant under the action of $G$, namely
for every permutation $\sigma \in G$ and transition $t\in T$ we have $\sigma(t) \in T$.
Thus, for every $G\leq\symg d$, we identify a subclass of \parvass d admitting the invariance property.
In the most restrictive case of $G=\symg d$, we get the subclass of \parvass d invariant under
all permutations of dimensions.
In the less restrictive  case of trivial group $\tg d = \{\id d\}$, we get
the whole class of all \parvass d.

\para{Reachability problem for symmetric \vass}

In our investigations we always assume that a group $G\leq \symg d$ (and hence also the dimension of \vass)
is fixed.%
\footnote{
Nevertheless, all our subsequent \pspace decision procedures are uniform with respect to dimension, namely they work 
within the same complexity bounds if dimension $d$ is part of input.
}
The \vass reachability problem, when its input is restricted to \parvass G for a fixed 
permutation group $G$, we call \reachprob G.
For the trivial group $G = \tg{d} = \{\id d\}$ we get
\reachprob d.
When the input \vass has only one state
the problem is called \vasreachprob G.

The set of transitions of a \parvass G is determined uniquely by representatives of orbits, i.e.,
by representatives of the equivalence relation on transitions:
$t\equiv_G t'$ if $t=\sigma(t')$ for some $\sigma \in G$.
Unfolding the action of $G$ on transitions, 
$(q, \vr v, p) \equiv_G (q', \vr v', p')$ if and only if $q=q'$, $p=p'$, and $\vr v = \sigma(\vr v')$ for some
$\sigma\in G$.
Equivalence classes of $\equiv_G$ are called \emph{$G$-orbits} of transitions, or simply
\emph{orbits} when the group is clear from the context.
When measuring the size of input
we assume a succinct representation, where
transitions $T$  of a \parvass G are given by orbit representatives, i.e.,
by a subset $\widetilde T\subseteq T$ containing one transition from every orbit.
By the \emph{size} of such a representation $(Q, \widetilde T)$ of a \parvass G $\V$ 
we mean the bitsize of its description, e.g.,
\[
\size {\V} \ = \ \size{\widetilde T} \cdot \big(2\cdot{\size Q} + d \cdot 
\lfloor \log (2\cdot\norm{\widetilde T}+1)\rfloor  \big),
\]
where $\norm{\widetilde T} = \max_{t \in \widetilde T} \norm{t}$.
Note that $\size {\V}$ is independent of a choice of orbit representatives, and
logarithmic in terms of $\norm {\widetilde T}$.


Whenever $G'\leq G\leq \symg d$, i.e., $G'$ is a permutation subgroup of $G$,
(a representation of) a \parvass G may be transformed into 
(a representation of) a \parvass {G'} in two different ways.
Let $\V = (Q, T)$ be a  \parvass G. 
On one hand,  $\V$ is itself a \parvass {G'}, as its transitions are automatically invariant under 
the action of $G'$,
but the representation of $\V$ seen as a \parvass {G'} may be exponentially larger than its representation
when seen as \parvass G
(since the number of $G'$-orbits of $T$ 
may be exponentially larger than the number of $G$-orbits of $T$). 
This yields an exponential (but not necessarily polynomial) reduction from \reachprob G to \reachprob {G'}.
On the other hand, a \emph{representation} of $\V$ is automatically a representation of 
another \parvass {G'} 
$\V'=(Q, T')$,
where the set of transitions $T'\subseteq T$ may be of exponentially smaller size than $T$,
but the number of $G'$-orbits of $T'$ is the same as the number of $G$-orbits of $T$.
Clearly, every run 
of $\V'$ is a run 
of $\V$, but the converse does not hold in general.

\para{$\Z$-reachability}

An important relaxation of reachability is \emph{$\Z$-reacha\-bi\-lity}, where one considers a relaxed notion
of configurations, namely \emph{$\Z$-configurations} $Q\times\Z^d$ instead of $Q\times\N^d$. 
Contrary to configurations, values in $\Z$-configurations may drop below 0.
We define \emph{\Zruns} exactly like runs, but using $\Z$-con\-fi\-gu\-ra\-tions instead of configurations.
Each \Zrun, and hence also each run, induces a unique path in the state graph.
Conversely, every path in the state graph gives rise to multiple different \Zruns,
depending on the choice of source vector, and of transitions witnessing consecutive edges of a path.

The \emph{$\Z$-reachability problem} asks, given a \parvass G $(\V,s,t)$, if $\V$ has a \Zrun
from $s$ to $t$.
We prove that the problem lies in \NP,
regardless of dimension $d$ of \vass and of the group $G\leq \symg d$,
even in the uniform version, when the group $G$ is a part of input.
This observation will be used later in
proofs of main results.

\begin{lemma}\label{lem:zrun}
    The $\Z$-reachability problem for symmetric \vass 
    is \NP-complete.
\end{lemma}

The lemma does not follow immediately from $\NP$ upper bound for
$\Z$-reachability in \vass,
as the input is represented more succinctly in our case.
In the proof we use the integer analogue of Caratheodory's theorem \cite{IntCaratheodory} to
get a small solution property: if there is a $\Z$-run between two given configurations
then there is one using only a subset of transitions of polynomial size.
By guessing this subset we reduce to the $\Z$-reachability 
problem in $\vass$, which is in \NP.
%

\begin{proof}[Proof of Lemma \ref{lem:zrun}]
    \NP-hardness, already in dimension 1, is shown similarly as \NP-hardness of reachability \cite{HKOW09},
    by reduction from \textsc{subset-sum}.
    We thus concentrate on \NP-membership.

We use a result of \cite{IntCaratheodory} stating that a positive integer combination
of a set of vectors  $X\subseteqfin \Z^d$ is always presentable as such a combination of 
a subset of $X$ of size logarithmic in $\norm X$.
Formally, we define the \emph{integer cone} of a finite set $X \subseteqfin \Z^d$ as
\[
\intcone{X} = \setof{\lambda_1 x_1 + \ldots + \lambda_t x_t}{t\geq 0 ; \ 
x_1, \ldots, x_t \in X ; \ \lambda_1, \ldots, \lambda_t \in \Npos}.
\]

\begin{theorem}[\cite{IntCaratheodory}, Thm.~1(ii)]\label{thm:cara}
    Let $X \subseteqfin \Z^d$ and 
    $\vr b \in \intcone{X}$.
    Then $\vr b \in \intcone{\tilde{X}}$ for some $\tilde{X} \subseteq X$ of size
    $\size{\tilde{X}} \leq 2d \cdot \log(4d \cdot \norm X)$.
\end{theorem}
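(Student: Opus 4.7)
The plan is to combine the rational Carathéodory theorem with an iterative halving argument. Applied to the $\Qf_{\geq 0}$-cone of $X$ in $\Qf^d$, the rational version guarantees that every point in $\mathrm{cone}_{\Qf_{\geq 0}}(X)$ admits a representation using at most $d$ vectors from $X$. This does not immediately transfer to integer cones, since eliminating a vector via a linear dependency may require non-integer scalings; the extra $\log(4d \norm X)$ factor in the theorem statement reflects exactly the overhead of forcing integrality.

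The core technical step will be a halving lemma: given $\vr b \in \intcone{X}$, produce $\vr b^{\mathrm{hi}}, \vr b^{\mathrm{lo}} \in \intcone{X}$ with $\vr b = \vr b^{\mathrm{hi}} + \vr b^{\mathrm{lo}}$, $\vr b^{\mathrm{hi}}$ of support at most $2d$, and $\norm{\vr b^{\mathrm{lo}}} \leq \norm{\vr b}/2 + \OO(d \norm X)$. The construction: apply rational Carathéodory to $\vr b / 2$, obtaining $\vr b / 2 = \sum_{i=1}^d \mu_i z_i$ with $\mu_i \in \Qf_{\geq 0}$, and set $\vr b^{\mathrm{hi}} := \sum_i 2 \lfloor \mu_i \rfloor z_i$. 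Iterating this lemma $\OO(\log(d \norm X))$ times peels $\vr b$ down to a residual of norm $\OO(d \norm X)$, at which point a direct Carathéodory bound on bounded-norm vectors finishes; summing the $\leq 2d$ vectors contributed per iteration yields the claimed bound $2d \log(4d \norm X)$.

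The main obstacle will be ensuring that the residual $\vr b^{\mathrm{lo}} = \vr b - \vr b^{\mathrm{hi}}$ stays in $\intcone{X}$ at each halving step, rather than merely being a $\Z$-combination of $X$ with possibly negative integer coefficients. I would handle this by starting from any integer representation $\vr b = \sum_j \lambda_j y_j$ with $\lambda_j \in \Npos$ and restricting $\vr b^{\mathrm{hi}}$ to partial sums of the form $\sum_j \lambda_j' y_j$ with $0 \leq \lambda_j' \leq \lambda_j$, so that the residual $\sum_j (\lambda_j - \lambda_j') y_j$ is automatically a nonnegative integer combination. Existence of such a dominated partial sum simultaneously achieving small support and the required residual norm would be established by a standard extreme-point argument on the rational polytope of feasible fractional $(\lambda_j')$: vertices have support bounded by the number of equality constraints in the defining system (the $d$ coordinates of the residual condition, together with the domination inequalities that become tight), yielding the promised $2d$.
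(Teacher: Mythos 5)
First, a point of comparison: the paper does not prove this statement at all --- it is imported verbatim from Eisenbrand and Shmonin \cite{IntCaratheodory}, so there is no in-paper proof to measure your attempt against. The proof in that reference is a pigeonhole argument quite different from yours: take $\tilde X\subseteq X$ of minimal cardinality $t$ with $\vr b\in\intcone{\tilde X}$; the $2^t$ subset sums $\sum_{x\in S}x$ for $S\subseteq\tilde X$ all have norm at most $t\cdot\norm X$, so if $2^t>(2t\norm X+1)^d$ two distinct subsets have the same sum, and adding a suitable integer multiple of the resulting zero relation to a representation of $\vr b$ drives one coefficient to $0$ while keeping all others nonnegative integers, contradicting minimality of $t$; solving $2^t\le(2t\norm X+1)^d$ yields $t\le 2d\log(4d\norm X)$.

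Your scheme has two gaps that I do not see how to close. (1) The iteration count is wrong. A halving step guaranteeing only $\norm{\vr b^{\mathrm{lo}}}\le\norm{\vr b}/2+\OO(d\norm X)$ needs $\Theta\big(\log(\norm{\vr b}/(d\norm X))\big)$ rounds to reach a residual of norm $\OO(d\norm X)$, and $\norm{\vr b}$ is not bounded by any function of $d$ and $\norm X$, since the coefficients $\lambda_j$ are arbitrary positive integers. The vectors contributed by different rounds need not coincide, so you obtain a support bound of order $d\log\norm{\vr b}$, not $2d\log(4d\norm X)$. The terminal step is also unsupported: there is no ``direct Carath\'eodory bound'' for targets of bounded norm --- a vector of norm $\OO(d\norm X)$ may still require representations with large, heavily cancelling coefficients, and bounding the support of such a representation is exactly the problem being solved. (2) The extreme-point argument does not deliver support $2d$. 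A vertex of $\setof{\lambda'\in\Qf^t}{0\le\lambda'_j\le\lambda_j,\ \sum_j\lambda'_j y_j=c}$ has at most $d$ coordinates strictly between their bounds, but every remaining coordinate sits at $0$ \emph{or at its upper bound} $\lambda_j$; the ones at the upper bound all lie in the support of $\lambda'$, and the ones at $0$ all lie in the support of $\lambda-\lambda'$, so neither half of your decomposition has its support controlled by the number of tight constraints. Moreover such a vertex need not be integral, which reintroduces the rounding problem your domination trick was meant to avoid. The subset-sum collision argument sketched above is both shorter and actually produces the stated constant.
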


    Assume an input  \parvass G $(\V, s, t)$, where
    $\V = (Q, T)$, $s = q(\vr v)$ and $t = p(\vr w)$, represented by 
    orbit representatives $\tilde T \subseteq T$.
    Importantly, the number of edges $\size E$ of the state graph $(Q, E)$ of $\V$ 
    is at most $\size{Q}^2$ and (at most linearly) depends
    on $\size{\tilde T}$, but not on $\size T$.
    We design a nondeterministic polynomial time procedure to decide if $\V$ has a \Zrun
    from $s$ to $t$.

    Every subset $E'\subseteq E$ induces a subgraph of the state graph.
    The set $E'$ is called \emph{connected} if the induced subgraph is connected 
    when orientations of edges $E$ are ignored.  
    As discussed above, a \Zrun from $s$ to $t$ induces a path 
    from $q$ to $p$ in the state graph.
    We say that the \Zrun \emph{uses} the edge $e$ if $e$ appears in the induced path.
    Clearly, the set of edges used by any \Zrun is connected.
    We rely on this fact, and therefore the decision procedure,
    as the first step,  guesses
    a connected subset $E'$ of $E$.
    Since $E'$ is guessed arbitrarily,
    the remaining part of the decision procedure is devoted to checking if  
    there is a \Zrun that induces a path from 
    $q$ to $p$ and uses exactly all edges of a given $E'$ (we call such \Zruns \emph{acceptable}), 
    and whose effect is $\vr w - \vr v$ ($*$).
    
    By \emph{Parikh image} of a \Zrun $\pi$ we mean a function $P_{\pi} \colon T\to\N$ that assigns to
    every transition $t\in T$ the number of appearances of $t$ in $\pi$.
    Let $T'\subseteq T$ denote the subset of those transitions that induce edges from $E'$:
    \[
    T' = \setof{(q', \vr x, p')\in T}{(q', p')\in E'}.
    \]
    As the second step, the decision procedure guesses an arbitrary acceptable \Zrun  which is 
    Parikh-minimal (minimal with respect to pointwise order on Parikh images). 
    The length of a guessed \Zrun (and also of the induced path) may be bounded polynomially.
    Indeed, it is enough to guess a \Zrun inducing a simple path from $q$ to $p$, 
    plus a finite set of \Zruns inducing simple cycles that (jointly) use every edge $e\in E'$. 
    Since $E'$ is connected, the guessed \Zruns may be concatenated into a single acceptable \Zrun $\pi$,
    of length at most $\size Q \cdot \size {E'}$.
    Let $\vr e  \in\Z^d$ denote the effect of this guessed \Zrun, and $\vr p = P_{\pi} \in \N^{T'}$ its Parikh image
    (being acceptable, the \Zrun only uses transitions from $T'$).
    As the above-described guessing procedure includes all Parikh-minimal acceptable \Zruns, 
    our task, namely checking whether
    
    \begin{itemize}
    \item[($*$)] there is an acceptable \Zrun whose effect is $\vr w - \vr v$,
    \end{itemize}
     
    \noindent
    reduces to checking whether
    
    \begin{itemize}
    \item[($\circ$)] there is an acceptable \Zrun whose effect is $\vr w - \vr v$ 
    and whose Parikh image dominates $\vr p$ with respect to the pointwise ordering.
    \end{itemize}
    
    \noindent
    Let $X\subseteqfin\Z^d$ be the set of effects of all simple cycles that use only transitions from $T'$.
    Note that $\norm{X} \leq \norm{T'} \cdot \size{Q}\leq \norm{T} \cdot \size{Q}$.
    We claim that ($\circ$) is equivalent to
    \begin{align} \label{eq:intconeX}
    \vr w - \vr v - \vr e \in \intcone{X}.
    \end{align}
    Indeed, acceptable \Zruns satisfying ($\circ$) are extensions of $\pi$ by finitely many simple cycles using
    only transitions of $T'$. 
    
    By Theorem \ref{thm:cara}, the condition\eqref{eq:intconeX} is equivalent to 
    $\vr w - \vr v - \vr e \in \intcone{\tilde{X}}$, for a subset $\tilde{X}$ of $X$ of 
    size $\size{\tilde{X}}$ polynomial in terms of input size.
    Therefore, in the last step, the decision procedure guesses a set $\tilde{X}$ 
    and checks whether $\vr w - \vr v - \vr e \in \intcone{\tilde{X}}$.
    The latter check is doable in $\NP$, as it is an instance of integer linear programming.
    The number of guessed items (transitions, vectors) in the whole algorithm is polynomial, and 
    therefore the final instance of integer linear programming is of polynomial size.
    In consequence, the whole decision procedure works in \NP.
\end{proof}

\section{Lower bound}
\label{sec:lowerbound}

This section contains a proof of \pspace-hardness of the reachability problem in 
\parvass G, for any $G \leq \symg d$ of dimension $d \geq 2$.
In particular, reachability in \parvass{\symg 2} is \pspace-hard.
Later, we will indicate cases where this lower bound is tight, for instance $G = \symg d$.

\begin{lemma}\label{lem:hardness}
    For every $d \geq 2$ and $G \leq \symg d$,
    \reachprob{G} is \pspace-hard. 
\end{lemma}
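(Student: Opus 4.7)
I would reduce from \reachprob{2}, which is \pspace-complete by a result of Blondin et al. The reduction splits into two cases based on the orbit structure of the action of $G$ on $\{1, \ldots, d\}$.

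\emph{Case A: $G$ has at least two orbits.} Let $O_1, O_2$ be two distinct $G$-orbits of coordinates. Given a \parvass{2} $\V = (Q, T)$ with source $q_0(a_0, b_0)$ and target $q_f(a_f, b_f)$, I construct a \parvass{G} $\V^*$ of dimension $d$ with the same state set $Q$: each transition $(p, (c, e), q) \in T$ is implemented as $(p, \vr v, q)$, where $\vr v \in \Z^d$ has value $c$ uniformly on $O_1$, value $e$ uniformly on $O_2$, and $0$ on all other coordinates. Since $\vr v$ is constant on every $G$-orbit of coordinates, it is $G$-invariant, so its $G$-orbit is the singleton $\{\vr v\}$ and no spurious transitions arise. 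The source and target of $\V^*$ distribute $a_0, b_0$ (resp.\ $a_f, b_f$) uniformly over $O_1, O_2$ and are $0$ elsewhere. By induction, configurations reachable in $\V^*$ remain constant on each $G$-orbit of coordinates, so the values on $O_1$ and $O_2$ project to a \parvass{2}-run over $\V$; conversely every $\V$-run lifts. Nonnegativity on all $d$ coordinates of $\V^*$ coincides with nonnegativity of the two $\V$-counters, giving a faithful polynomial-time reduction.

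\emph{Case B: $G$ is transitive on $\{1, \ldots, d\}$.} This case covers in particular $G \in \{\symg d, \altg d, \cycg d\}$ (when acting transitively). Since there is a single orbit, the Case A construction would only simulate a \parvass{1}, which is insufficient for \pspace. I instead use a gadget-based reduction: the state space of $\V^*$ is polynomially enlarged to encode, alongside the $\V$-state, a bookkeeping tag indicating the intended ``placement'' of the two $\V$-counters among the $d$ coordinates. Each $\V$-transition is implemented by a gadget of several $\V^*$-transitions whose intermediate state updates compensate for the $G$-permutation ambiguity of effects: any ``wrong'' permutation fired within a gadget either matches an equivalent placement (absorbed by a state change) or leaves non-zero residues on coordinates that must return to $0$ by the target configuration, effectively trapping the spurious run.

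\emph{The main obstacle} is Case B: ensuring that $G$-invariance does not allow $\V^*$ to reach the target via a run that does not correspond to any valid $\V$-run. The gadget design must enforce that, at the target configuration, any spurious permutation choices have been compensated by further choices, and that the overall pattern of compensations is consistent with some valid $\V$-run. This requires delicate use of the enlarged state space together with additional ``verifier'' transitions that detect placement discrepancies via nonnegativity constraints.
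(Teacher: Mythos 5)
Your Case A is sound: when $G$ has at least two orbits on $\setto d$, a vector that is constant on every orbit is $G$-invariant, so its orbit is a singleton, no spurious transitions appear, and you get a clean polynomial reduction from \reachprob{2}. But Case B --- the transitive case, which is exactly where the interesting groups $\symg d$, $\altg d$, $\cycg d$ live --- is not a proof. You describe a ``gadget-based reduction'' with bookkeeping tags and verifier transitions, and then yourself identify the central difficulty (preventing $G$-invariance from admitting runs with no valid $\V$-preimage) as an unresolved obstacle. No concrete gadget is given, and no argument is given that spurious permuted transitions are actually trapped. Since the statement's whole content for $G=\symg d$ etc.\ rests on this case, the proposal as it stands has a genuine gap.

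The paper closes this gap with a single construction that works uniformly for every $G\leq\symg d$, $d\geq 2$, and it reduces not from \reachprob 2 but from \emph{bounded} reachability in \parvass 1 (\pspace-complete by Fearnley--Jurdzi{\'n}ski): a configuration $q(n)$ with $n\leq M$ is encoded as $q(n,\ldots,n,2M-n)$, and each $1$-dimensional transition of effect $\pm z$ is simulated by two transitions whose effect vectors are constant on all coordinates except one, which carries a value of magnitude $M+z>M$ with the opposite sign (e.g.\ $(M+z,\ldots,M+z,-M-z)$). Nonnegativity then forces the single ``large negative'' entry to land on the unique large coordinate (the one holding $2M-n\geq M$), so every non-identity permuted copy of a transition immediately drives some coordinate below zero and is unusable --- this is the determinization mechanism your Case B is missing. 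If you want to salvage your plan, you could try to adapt this idea rather than invent verifier gadgets: the key is to reduce from a problem where the counter values are \emph{a priori} bounded, so that one coordinate can be kept provably larger than all others and thereby pin down which element of each transition orbit can fire.
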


\begin{proof}
    We say that a configuration $p(\vr w)$ is \emph{bounded} by $M\in\N$ if 
    $\norm{\vr w} \leq M$.
    The proof is by reduction from the \emph{bounded reachability} problem
        for \parvass 1, which is known to be $\pspace$-complete \cite{Fearnley_2015}.
        The input to this problem consists of a \parvass 1 $(\V, s, f)$
        and a number $M\in\N$ represented in binary, and we ask if  
        $\V$ has a run $s\trans{} f$ with all configurations bounded by $M$.
        For simplicity, we assume that the vector of a configuration of a \parvass 1 $q(n)$ is 
        just a nonnegative integer $n\in\N$,
        and the effect of a transition $(q, z, p)$ is just an integer $z\in\Z$.
        \Wlog~we assume that effects of transitions of a \parvass 1 are nonzero and at most $M$.
        
        Fix an arbitrary dimension $d\geq 2$.
        Given a \parvass 1 $(\V, s, f)$ and $M$, where $\V = (Q, T)$, we
        construct a \parvass {G} $\V' = (Q', T')$ of dimension $d$
        whose states $Q' := Q \cup \setof{\pair q t}{q\in Q, t \in T}$
        extend $Q$ by additional auxiliary states, and whose transitions
        $T'$ are defined below.
        The idea is to represent a
        configuration $q(n)$ of $\V$ bounded by $M$ ($n\leq M$), by a configuration 
        \[\overline{q(n)} \ := \ q(n, \ldots, n, 2M-n)\] of $\V'$, while pairs $\pair q t$ are additional auxiliary states.
        We define
        the set $T'$ of transitions by providing a set of orbit representatives
        $\widetilde T\subseteq T'$.
        Below, we use $z\in\Npos$ to range over positive integers, and use the vectors
        \[ 
        \vr v_z = (M+z, \ldots, M+z, -M-z) \qquad\qquad
        \vr v_{-z} = (-M-z, \ldots, -M-z, M+z) 
        \]
        that have value $M+z$ (resp.~$-M-z$) on all coordinates except coordinate $d$ which has the opposite value.
        We also use $\vr v = (M, \ldots, M, -M)$.
        The set $\widetilde T$ is defined as follows:
            
        \begin{itemize}
            \item For every $t=(p, z, q) \in T$, we put into $\widetilde T$ two transitions (recall that $z\in\Npos$)
            \[
            \widetilde t = (p, \vr v_z, \pair q t) \qquad\qquad
            t^- = (\pair q t, -\vr v, q);
            \]
            \item For every $t=(p, -z, q) \in T$, we put into $\widetilde T$ two transitions 
            \[
            t^+ = (p, \vr v, \pair p t) \qquad\qquad
            \widetilde t = (\pair p t, \vr v_{-z}, q).
            \]
        \end{itemize}
        
        \noindent
        As all vectors $\vr v_z, \vr v_{-z}$ and $\vr v$ are constant except for one coordinate, 
        we notice that the size of $T' = \setof{\sigma(t)}{t\in\widetilde T, \sigma \in G}$ is at most        
        $\size{T'} \leq d \cdot \size{\widetilde T}$. 
        This is however not relevant for the reduction, as $\V'$ is represented by $\widetilde T$, and not by $T'$.
        
        We argue that there is a run $s\trans{} f$ in $\V$ whose all configurations are bounded by $M$,
        if and only if $\overline s \trans{} \overline f$ in $\V'$.
        For the `only if' direction we observe that every step $c\trans{t} c'$ in $\V$ which is bounded by $M$ 
        is simulated by two steps $\overline c\trans{} c' \trans{} \overline c'$ in $\V'$, for some intermediate
        configuration $c'$ with an auxiliary state.
        Indeed, when $t=(p, z, q)$, 
        a step $p(n) \trans{t} q(n+z)$ of $\V$ which is bounded by $M$ ($n+z\leq M$),
        is simulated in $\V'$ by the following two steps:
        \begin{align} \label{eq:2+}
        \overline{p(n)} \trans{\widetilde t} \pair q t(M+n+z, \ldots, M+n+z, M-n-z) \trans{t^-} \overline{q(n+z)}.
         \end{align}
         In the other case, when $t=(p, -z, q)$, 
        a step $p(n) \trans{t} q(n-z)$ of $\V$ which is bounded by $M$ ($n\leq M$),
        is simulated in $\V'$ by the following two steps:
        \begin{align} \label{eq:2-}
        \overline{p(n)} \trans{t^+} \pair p t(M+n, \ldots, M+n, M-n) \trans{\widetilde t} \overline{q(n-z)}.
         \end{align}
         For the `if' direction, we consider a configuration 
         $\overline{q(n)} = q(n, \ldots, n, 2M-n)$, and analyze all possible two-step runs in $\V'$ 
         starting from $\overline{q(n)}$:
         $
         \overline{q(n)} \trans{} c' \trans{} c''.
         $
         By definition of $T'$, we deduce that 
         the only possible such two-step runs are of the form \eqref{eq:2+} or \eqref{eq:2-}.
         Indeed, the first transition may be only $\widetilde t$ for some $t=(q, z, p)\in T$,
         or $t^+$ for some $t=(q,-z,p)\in T$, and then the next transition is uniquely determined
         by the auxiliary state.
         In particular, we observe that only transitions from $\widetilde T$ can be used, i.e.,
         all other transitions $T' \setminus \widetilde T$ are useless if one starts from a configuration
         of the form $\overline{p(n)}$.
         In consequence, if $\overline s\trans{} \overline f$ in $\V'$, we have also
         a run $s\trans{} f$ in $\V$ whose all configurations are bounded by $M$, as required.
         Correctness of the reduction follows.
        \end{proof}

\begin{remark}
In particular, we have \pspace-hardness for \parvass{\symg 2}, which is 
an improvement over \pspace-hardness for \parvass 2 \cite{Blondin15}.
\end{remark}


\begin{remark}\label{rem:bounded_hard}
Notice that all reachable configurations of the \parvass{\symg d} constructed in 
the proof of Lemma \ref{lem:hardness} are bounded by $2M$.
We thus deduce that the lower bound applies to \parvass G $(\V, s)$ which are 
\emph{polynomially bounded},
namely bounded by some $B$ polynomial in $\norm{\V, s}$.
\end{remark}

\section{Transitive and fair groups}
\label{sec:transitive}

Within this section we fix an arbitrary \emph{transitive} group $G \leq \symg d$,
i.e., we assume that for every $i, j \in \setto d$ there exists $\sigma \in G$
such that $\sigma(i) = j$.
The results of this section will be applied later to specific transitive groups $G$.

\para{Pumpable \vass}

Consider a \parvass G $(\V, s, t)$, where $s=q(\vr w)$ and $t=q'(\vr w')$.
We say that 
$(\V, s)$ is \emph{forward pumpable} if $\V$ has a run $q(\vr w) \trans{} q(\vr w+\vr e)$ 
for some $\vr e \geq \const 1$.
Symmetrically, $(\V, t)$ is \emph{backward pumpable} if 
$(\rev \V, t)$ is forward pumpable where $\rev \V$, the \emph{reverse} of $\V$,
is obtained by replacing each transition $(q, \vr v, q')$ of $\V$ by its reverse
$(q', -\vr v, q)$.
Equivalently,  $\V$ has a run $q'(\vr w'+\vr e') \trans{} q'(\vr w')$ 
for some $\vr e' \geq \const 1$.
Finally, $(\V, s, t)$ is \emph{pumpable} if $(\V, s)$ is forward pumpable, and $(\V, t)$ is backward pumpable.
We prove that reachability reduces to $\Z$-reachability,
when the group $G$ is transitive and \vass are pumpable.
\begin{lemma} \label{lem:transitive}
    If $G$ is transitive then
    every pumpable \parvass G $(\V, s, t)$ admitting a \Zrun from $s$ to $t$, admits
    a run from $s$ to $t$.
\end{lemma}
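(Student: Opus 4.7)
The plan is to exploit transitivity of $G$ to turn each pumping loop into one whose effect is a positive multiple of $\const 1$, and then sandwich the given \Zrun $\pi\colon s\to t$ between enough symmetrized forward and backward pumping so that every intermediate configuration remains non-negative.

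Let $\alpha\colon q(\vr w)\to q(\vr w+\vr e)$ and $\beta\colon q'(\vr w'+\vr e')\to q'(\vr w')$ be forward and backward pumping runs given by pumpability, so $\vr e,\vr e'\ge\const 1$. Since $G\le\symg d$ is finite and transitive, the orbit-stabilizer theorem yields, for every $\vr u\in\Z^d$,
\[
\sum_{\sigma\in G}\sigma(\vr u) \ = \ \frac{|G|}{d}\Bigl(\sum_{i=1}^{d}\vr u(i)\Bigr)\cdot\const 1;
\]
applied to $\vr e$ and $\vr e'$ this gives $\sum_\sigma\sigma(\vr e)=c\cdot\const 1$ and $\sum_\sigma\sigma(\vr e')=c'\cdot\const 1$ with $c,c'\ge 1$. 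Enumerate $G=\{\sigma_1,\dots,\sigma_m\}$ with $\sigma_1=\id d$ and fix a large integer $L$ to be chosen later. By $G$-invariance of $\V$, each $\sigma_i(\alpha)$ and each $\sigma_i(\beta)$ is a run of $\V$, so the concatenations
\[
\gamma \ := \ \alpha^{L}\,\sigma_2(\alpha)^{L}\cdots\sigma_m(\alpha)^{L}, \qquad
\delta \ := \ \sigma_m(\beta)^{L}\cdots\sigma_2(\beta)^{L}\,\beta^{L}
\]
are sequences of transitions with effects $Lc\cdot\const 1$ and $-Lc'\cdot\const 1$ respectively.

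The proposed witness from $s$ to $t$ is the composition $\gamma^{c'}\,\pi\,\delta^{c}$. Its effect telescopes to $c'Lc\cdot\const 1+(\vr w'-\vr w)-cLc'\cdot\const 1=\vr w'-\vr w$, so the final configuration is exactly $t=q'(\vr w')$. To guarantee that every intermediate configuration is non-negative, I would pick $L\ge\max(\|\vr w\|,\|\vr w'\|,D_\pi)$, where $D_\pi$ is a uniform bound on $\max_i\|(-\vr p_i)\|_\infty$ over all prefix sums $\vr p_i$ of $\pi$. Then: (i) after the initial $\alpha^{L}$ block of $\gamma$, the running vector dominates $L\cdot\const 1\ge\sigma_i(\vr w)$ for every $i\ge 2$, so each subsequent block $\sigma_i(\alpha)^{L}$ is a valid run from the current configuration; (ii) the buffer $c'Lc\cdot\const 1$ accumulated by $\gamma^{c'}$ exceeds $D_\pi\cdot\const 1$, so $\pi$ becomes a proper run when shifted up by this buffer; (iii) symmetrically for $\delta$, with the identity placed last so that the final block $\beta^{L}$ lands precisely at $q'(\vr w')$. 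Later iterations of $\gamma$ and earlier iterations of $\delta$ only enlarge the buffer and impose no new constraint.

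The main technical obstacle is this bookkeeping of non-negativity along the symmetrized pumps: each symmetrized sub-block $\sigma_i(\alpha)^{L}$ has its own natural starting vector $\sigma_i(\vr w)$, and validity demands that the running vector dominate it coordinate-wise. Placing the identity as the first block of $\gamma$ (and the last block of $\delta$) is precisely what aligns the boundary conditions at $s$ and $t$ without requiring any extra buffer, while for all interior blocks the partial buildup grows by at least $L\cdot\const 1$ per block thanks to $\vr e,\vr e'\ge\const 1$, which dominates every permuted copy of $\vr w$ and $\vr w'$ once $L\ge\max(\|\vr w\|,\|\vr w'\|)$.
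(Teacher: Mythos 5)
Your proposal is correct and follows essentially the same route as the paper's proof: both use the orbit--stabilizer count $\size{\setof{\sigma\in G}{\sigma(i)=j}}=\size{G}/d$ to show that the $G$-symmetrized pumping effect is a positive constant vector, both concatenate the permuted copies of the pumping runs with the identity placed first (forward) and last (backward) so that sufficiently many repetitions keep all coordinates non-negative, and both then sandwich the \Zrun between enough forward and backward pumping. The only cosmetic difference is that the paper first replaces $\alpha$ by $\alpha^n$ to make it ``increasing'' and then takes one copy of each $\sigma_i(\alpha)$, whereas you take $L$ copies of each permuted block directly; these are the same device.
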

\begin{proof}
    Let $G\leq \symg d$.
    We start by proving the following equality for all $i, j \in \setto d$:
    \begin{equation}\label{eq:uniform}
        \size{\setof{\sigma \in G}{\sigma(i) = j}} \ = \ \frac{\size{G}}{d}.
    \end{equation}
    When $i = j$, the equality follows immediately from the Orbit-Stabilizer Theorem.
    Otherwise, suppose $i \neq j$ and take any permutation $\sigma_{ij} \in G$ 
    such that $\sigma_{ij}(i) = j$ (it exists as $G$ is transitive).
    By post-composing with $\sigma_{ij}$ we get
    a bijection $\sigma \mapsto \sigma_{ij} \circ \sigma$
    between permutations $\sigma\in G$ that satisfy $\sigma(i) = i$, and permutations $\sigma\in G$ 
    that satisfy
    $\sigma(i)=j$.
    In consequence, we get the equality \eqref{eq:uniform}:
    \begin{equation*}
        \size{\setof{\sigma \in G}{\sigma(i) = j}} \ = \ \size{\setof{\sigma \in G}{\sigma(i) = i}} \ =  \ 
        \frac{\size{G}}{d}.
    \end{equation*}

    Consider a \parvass G $\V$ and two configurations $s = q(\vr v)$ and $t=q'(\vr v')$, and suppose
    $\V$ has a \Zrun  $\gamma$ from $q(\vr v)$ to $q'(\vr v')$, 
    and two runs 
    \[
    \alpha \colon q(\vr v) \trans{} q(\vr v + \vr e) \qquad\qquad 
    \alpha' \colon q'(\vr v'+\vr e') \trans{} q'(\vr v'),
    \] 
    for some $\vr e, \vr e'\geq \const 1$.
    We argue that  $\V$ has a run $q(\vr v) \trans{} q'(\vr v)$.
    \Wlog we can assume that $\alpha$ is \emph{increasing}, i.e., every coordinate of $\vr v + \vr e$ is greater than
    the greatest coordinate of $\vr v$ (otherwise we replace 
    $\alpha$ by its $n$-fold concatenation $\alpha^n \colon q(\vr v) \trans{} q(\vr v + n\vr e)$, for sufficiently large $n \in \N$).
    Symmetrically, we can assume that $\alpha'$ is \emph{decreasing}, i.e.,
    every coordinate of $\vr v' + \vr e'$ is greater than the greatest coordinate of $\vr v'$.

    Let $G = \set{\sigma_1, \ldots, \sigma_m}$, where $\sigma_1 = \id d$ is the identity
    permutation on $\setto d$,
    and let
    \begin{equation*}
    \Delta \ := \ \sum_{i = 1}^{m} \sigma_i(\vr e),
    \qquad \qquad
    \Delta' \ := \ \sum_{i =1}^m \sigma_i(\vr e').
    \end{equation*}
    As $\alpha$ is increasing and $\sigma_1(\alpha) = \alpha$, the concatenation of runs
    \[
    \widetilde \alpha  \ = \ \sigma_1(\alpha) ; \sigma_2(\alpha) ; \ldots ; \sigma_{m}(\alpha)
    \] 
    is forcedly a run from $q(\vr v)$ to $q(\vr v + \Delta)$.
    Symmetrically, as $\alpha'$ is decreasing  and $\sigma_1(\alpha') = \alpha'$,
    the concatenation of runs
    $
    \widetilde \alpha'  \ = \ \sigma_{m}(\alpha') ; \ldots ; \sigma_2(\alpha') ; \sigma_{1}(\alpha')
    $ 
    is forcedly a run from $q'(\vr v' + \Delta')$ to $q'(\vr v')$.
    We argue that $\Delta(j) = \Delta(k)$ for every $j,k \in \setto d$, i.e., the effect of 
    $\widetilde\alpha$ is a constant vector.
    Indeed, by point-wise expansion of the equality
    \[
    \Delta \  = \ 
    \sum_{i=1}^m \sigma_i(\vr e)  \ = \ \sum_{i=1}^m \vr e \circ \sigma_i,
    \]
    and using the equality \eqref{eq:uniform} to deduce that for every $\ell$, the value of $\ell$th 
    coordinate
    $\vr e(\ell)$ of $\vr e$ appears the same number of times on both sides of the equality below, we compute:
    \[
    \Delta(j) \ = \ \sum_{\ell = 1}^d \ \sum_{\substack{i\,:\,\sigma_i(\ell) = j}} \vr e(\ell)
    \ \ \stackrel{\eqref{eq:uniform}}{=} \ \ 
    \ \ \sum_{\ell = 1}^d \ \sum_{\substack{i\,:\,\sigma_i(\ell) = k}} \vr e(\ell)
    \ = \ \Delta(k).
    \]
    Therefore, $\Delta = \const n$ for some $n\in\Npos$.
    Analogously we prove that $\Delta' = \const{n'}$ for some $n'\in\Npos$.
    Let $\widetilde \Delta = \const{(n\cdot n')}$.
    Since $\widetilde \Delta = n'\cdot \Delta = n \cdot \Delta'$,
    we may construct two  runs that pump forward and backward by the same vector $\widetilde \Delta$:
    \[
    \beta = \widetilde \alpha^{(n')} \colon q(\vr v) \trans{} q(\vr v + \widetilde\Delta) \qquad\qquad
    \beta' = (\widetilde \alpha')^n \colon q'(\vr v' + \widetilde\Delta) \trans{} q'(\vr v').
    \]
    Thus, for sufficiently large $m\in\Npos$, the $m$-fold iterations of $\beta$ and $\beta'$ 
    are enough to lift the \Zrun $\gamma$ into a run, i.e., to ensure that
    $\beta^m ; \gamma ; (\beta')^m$ is a run from $q(\vr v)$ to $q'(\vr v')$.    
\end{proof}

\para{\Fair groups}

Given $d\in\Npos$ and a transitive permutation group $G\leq \symg d$,
we say that $G$ is \emph{\fair} if 
there is a polynomial $P$
such that
for every $R\in\N$ and every \parvass G $(\V,s)$ with $S := \norm{\V, s}$ and $N := \norm{\V}$,
the implication $(1) \implies (2)$ holds, where

\begin{itemize}
\item[(1)]
$\V$ has a run 
$s \trans{} p(\vr w)$ for some $p\in Q$ and $\vr w$ of norm
$\norm{\vr w} > P(N)\cdot (S+R)$,
\item[(2)]
$\V$ has a  run $s \trans{} p(\vr w)$ for some $p\in Q$ and 
$\vr w \geq \const{S + R+1}$.
\end{itemize}

\noindent
In words:
whenever $(\V,s)$ has a run whose target vector exceeds $P(N)\cdot (S+R)$ on \emph{some} coordinate,
$(\V,s)$ is guaranteed to have a run whose target vector is greater than $S+R$ on \emph{all} coordinates;
and the property holds for any chosen $R\in\N$. 

\begin{lemma}\label{lem:pspace}
    If $G$ is transitive and fair,
    then \reachprob G is in \pspace.
\end{lemma}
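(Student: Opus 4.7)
The plan is to decide \reachprob{G} in \pspace by combining bounded-reachability exploration with the reduction of reachability to $\Z$-reachability granted by Lemma~\ref{lem:transitive}. The two techniques are complementary: bounded reachability handles runs staying within configurations of polynomial norm; Lemma~\ref{lem:transitive} handles runs that must go through high configurations, provided $(\V,s)$ is forward pumpable and $(\V,t)$ is backward pumpable. The role of fairness is to enforce a dichotomy between these two regimes and to make the relevant certificates of polynomial size.

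Concretely, I will fix a polynomial bound $B$ depending on $P(N)$, $\norm{s}$, $\norm{t}$ and $\size{Q}$, and argue that any configuration reachable from $s$ (or backward reachable from $t$) whose norm exceeds $B$ already witnesses forward pumpability at $s$ (respectively backward pumpability at $t$). The argument chains together three ingredients. The fairness condition upgrades a single-coordinate excursion of norm $>P(N)(S+R)$ to a uniformly high reachable configuration $\geq \const{S+R+1}$. Iterating fairness up to $R$ of order $\size{Q}$, and applying pigeonhole on the finite state set together with Dickson's lemma to the resulting high reachable configurations, yields some state visited at two coordinate-wise comparable configurations with strictly positive difference in some coordinate. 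Finally, transitivity of $G$ allows us to symmetrize the extracted loop across the group action so that its cumulative effect is a constant vector $\geq \const{1}$; this is exactly the averaging computation used inside Lemma~\ref{lem:transitive}, where $\sum_{\sigma \in G}\sigma(\vr e)$ is shown to be a constant vector. An analogous argument applied to backward runs from $t$ yields backward pumpability.

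The resulting \pspace procedure has two subroutines run in parallel. The first is bounded reachability from $s$ to $t$ with bound $B$, solved by standard Savitch-style state-space search using $O(d \log B + \log \size{Q})$ bits per configuration. The second combines a $\Z$-reachability check (in $\NP \subseteq \pspace$ by Lemma~\ref{lem:zrun}) with polynomial-norm certificates of forward pumpability at $s$ and backward pumpability at $t$; these pumpability certificates are themselves bounded reachability subproblems within the norm bound guaranteed by the fairness dichotomy, hence also in \pspace. The algorithm accepts iff either subroutine accepts, which by the dichotomy and Lemma~\ref{lem:transitive} coincides with reachability from $s$ to $t$.

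The main obstacle will be the rigorous proof of the dichotomy: showing that whenever every witness run from $s$ to $t$ must visit a configuration of norm exceeding $B$, pumpability actually holds at both endpoints. The delicate point is that the extracted loop must be anchored at the specific state of $s$ (respectively $t$), not at an arbitrary intermediate state visited by a long run. I expect this to require a careful interplay of fairness iteration (to reach uniformly high configurations), pigeonhole plus Dickson (to find a repeated state with non-decreasing coordinates), and transitivity-based symmetrization over $G$ (to turn an effect positive in one coordinate into one strictly positive on every coordinate), following the pattern already established in the proof of Lemma~\ref{lem:transitive}.
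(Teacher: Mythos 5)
There is a genuine gap: the dichotomy at the heart of your plan is false as stated, and the step you flag as ``the main obstacle'' is not a technicality but the place where the argument breaks. A configuration of large norm reachable from $s$ does \emph{not} in general witness forward pumpability of $(\V,s)$: fairness only guarantees a run $s \trans{} p(\vr w)$ with $\vr w \geq \const{S+R+1}$ for \emph{some} state $p$, and if $p$ lies in a different strongly connected component of the state graph than the state of $s$, there is no way to return to $s$'s state at all, so no pumping loop anchored at $s$ exists. Concretely, take a sequential \vass whose first component is a single state with no self-loop and whose second component forces every run to climb above any polynomial bound before descending to $t$: reachability holds, yet your bounded-search subroutine fails (every witness run goes high) and your pumpability-based subroutine also fails ($(\V,s)$ is not forward pumpable), so the parallel composition wrongly rejects. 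The paper's proof avoids this by first decomposing into strongly connected components and working with a sequential decomposition; it then \emph{iteratively peels off} a first (or last) component that \emph{is} bounded by $P(N)\cdot(S+R)$, guessing its polynomial-norm exit configuration, verifying reachability inside it by bounded exploration, and recursing on the remaining \vass with a new source. Only when both end components are unbounded does it invoke fairness, and there the choice $R = \size{Q}\cdot\norm{\V}$ is exactly what lets the high configuration be routed back to the source (resp.\ target) state \emph{within the strongly connected component}, losing at most $R$ per coordinate, which is what anchors the pump at $s$ and $t$. Your proposal contains neither the peeling step nor the strong-connectivity argument, and the tools you propose in their place do not supply them: symmetrization over $G$ (the averaging from Lemma~\ref{lem:transitive}) turns a pumping loop into one with constant positive effect, but only once you already have a loop based at the right state, and Dickson's lemma plus pigeonhole yields a repeated state somewhere along the run with no control over which state and no polynomial bound on anything --- indeed Dickson-style extraction is precisely what would destroy the \pspace bound, and the paper never uses it.

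A secondary issue is that you treat ``forward pumpability at $s$'' as something to be certified directly by a bounded-reachability subproblem. The paper never verifies pumpability explicitly; it checks whether the first component is bounded by $P(N)\cdot(S+R)$ (a bounded exploration), and \emph{derives} pumpability from unboundedness via fairness plus strong connectivity. If you insist on certifying a run $q(\vr v)\trans{}q(\vr v+\vr e)$ with $\vr e\geq\const 1$ directly, you need a length or norm bound on such a witness, which is exactly what you do not have without the fairness-plus-SCC argument.
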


\begin{proof}    
We start with some preparatory definitions.
    A \vass is called \emph{strongly connected} if its state graph is so.
A \vass $\V = (Q, T)$ is called \emph{sequential},
if it can be partitioned into a number of strongly connected 
\vass $\V_1=(Q_1, T_1), \ldots, \V_k=(Q_k, T_k)$ with pairwise disjoint state spaces, 
called \emph{components} of $\V$,
and $k-1$ transitions
$u_i=(q_i, \vr v_i, q'_i)$, for $i\in\setto{k-1}$, where $q_i\in Q_i$ and $q'_i \in Q_{i+1}$,
called \emph{bridges} of $\V$
(see Fig.~\ref{fig:kcompvass}).
For every non-sequential \vass $(\V, s, t)$ there is a finite set  $W=\{\V^1, \ldots, \V^\ell\}$ such
that each $\V^j$ is a sequential \vass, and
 $s\trans{} t$ in $\V$ if and only if $s\trans{} t$ in $\V^j$ for some $j\in\setto\ell$. 
The set $W$, called \emph{sequential decomposition} of $\V$ in the sequel, 
may be computed based on the decomposition of the state graph of $\V$ into strongly connected
components.

\begin{figure}
\scalebox{0.75}{
\begin{tikzpicture}[shorten >=1pt,node distance=3cm,on grid,>={Stealth[round]},
    every state/.style={draw=blue!50,thick,fill=blue!20},scale=0.25]

  \node[state]          (q_0)                      {$\V_1$};
  \node[state]          (q_1) [right=of q_0] {$\V_2$};
  \node[]                  (q_ldots) [right=of q_1] {\ \ \ \ \ \ ... \ \ \ \ \ \ };
  \node[state]          (q_k) [right=of q_ldots] {$\V_k$};

  \path[->] 
            (q_0) edge              node [above]  {$u_1$} (q_1)
            (q_1) edge              node [above]  {$u_2$} (q_ldots)
            (q_ldots) edge              node [above]  {$u_{k-1}$} (q_k);
\end{tikzpicture}
}
\caption{A sequential \vass.}
\label{fig:kcompvass}
\end{figure}

\smallskip

    Assume an input  \parvass G $(\V, s, t)$, where
    $\V = (Q, T)$, $s = q(\vr v)$ and $t = p(\vr w)$, represented by 
    orbit representatives $\widetilde T \subseteq T$.
    The \pspace decision procedure to check if $s\trans{} t$, to be described below, proceeds in several steps.

\smallskip

    As the first step, the decision procedure decomposes the state graph of $\V$ 
    into strongly connected components, and
    nondeterministically picks up one sequential \vass from the sequential decomposition of $(\V,s,t)$.
    From now on we may thus assume, \mywlog, that $\V$ is sequential with $k$ components
    $\V_1, \ldots, \V_k$, the configuration $s$ belongs to the first
    component, and $t$ to the last one.
    The number $k$ of components is bounded by the number $\size{Q}$ of states of $\V$.
    Let $R := \size Q \cdot \norm{\V}$.
    Thus, for every two states $p, p'$ from the same component of $\V$,
    there is a path from $p$ to $p'$ that decreases each counter by at most $R$.
    
    \smallskip
    
    Recall that, for $M\in\Npos$, we say that a configuration $p(\vr w)$ is 
    \emph{bounded} by $M$ if $\norm{\vr w} \leq M$. 
    We also say that the first component $(\V_1, s)$  is bounded by $M$ if every configuration $c$ 
    reachable in this component from $s$ (i.e, $c$ satisfying $s\trans{} c$) is bounded by $M$.
    Symmetrically, we say that
    the last component $(\V_k, t)$ is reverse bounded by $M$ if every configuration $c$ reverse
    reachable from $t$ (i.e., $c$ satisfying $c\trans{} t$) in this component is bounded by $M$.
   
\smallskip
 
    As the second step, the decision procedure iteratively removes either the first or the last component from
    $(\V,s,t)$ until the following two conditions are met, for $N := \norm{\V}$ and $S = \norm{\V, s, t}$:
    
    \begin{itemize}
    \item   
    the first component is \emph{not bounded} by $P(N)\cdot (S+R)$, and
    \item
    the last component is \emph{not reverse bounded} by $P(N)\cdot (S+R)$,
    \end{itemize}

    \noindent
    where $P$ is the polynomial given by the definition of a \fair{} group.
    In particular, the first condition implies (1) for $(\V_1, s)$, and the second one implies (1) for $(\rev \V_k, t)$.
    Suppose  
    the first component is \emph{bounded}
    by $P(N)\cdot (S+R)$ (the symmetric case, when the last component is reverse bounded by $P(N)\cdot (S+R)$, is 
    treated in the same way).
    One iteration of the removal procedure proceeds as follows, depending on whether $k=1$ or $k>1$:
    
    \para{Case I: $k=1$}

    If this component is the only one, the procedure
    checks in \pspace if $s\trans{} t$,  and terminates.
    The check is done by a standard nondeterministic walk through the graph of configurations reachable from $s$,
    which starts at the source configuration $s$, and in every iteration moves to a nondeterministically chosen 
    successor of the current configuration.
    It terminates when $t$ is reached (and answers positively), or when the counter of so far visited configurations
    exceeds \[ \size Q \cdot \big(P(N)\cdot (S+R)+1\big)^d,\]
    the total number of configurations bounded by $P(N)\cdot (S+R)$.
    The memory size, namely the bitsize of a single configuration, and of the value of the counter, 
    are both polynomially bounded, and hence the
    whole procedure works in (nondeterministic) \pspace.
     
    \para{Case II: $k>1$}

    Otherwise, 
    let $u_1 = (p, \vr u, p')$ be the first bridge.
    We observe that
    if $\V$ has a run $s \trans{} t$, then its last configuration $p(\vr w)$ in the first component
    is necessarily bounded by $P(N)\cdot (S+R)$.
    In consequence, since the bridge transition may increase the norm by at most $N$, the first configuration
    $p'(\vr w')$ of the run in the second component, where $\vr w'=\vr w+\vr u$, 
    is forcedly bounded by $P(N)\cdot (S+R) + N$.
    Relying on this observation, the decision procedure nondeterministically chooses a vector $\vr w$
    with $\norm{\vr w}\leq P(N)\cdot (S+R)$, and checks in \pspace if $q(\vr v) \trans{} p(\vr w)$
    in the first component, similarly as above.
    Then the procedure removes the first component from $\V$, and takes 
    $s' = p'(\vr w') = p'(\vr w+\vr u)$ as the source configuration, thus obtaining a new \parvass G 
    $(\V', s', t)$ with one less component than $(\V, s, t)$.
    
\smallskip
    
    This removal is iteratively repeated until either the procedure terminates, or produces
    a $(\V', s', t)$ with the first component \emph{not bounded} by $P(N')\cdot (S'+R)$,
    and the last component \emph{not reverse bounded} by $P(N')\cdot (S'+R)$, 
    where $N' := \norm{\V'}$ and $S' := \norm{\V', s', t'}$.
    The bound on the norm $N'$ of the \vass cannot increase, and so $N' \leq N$.
    However, in each iteration the bound on norm $S'$ of the new source vector $\vr w'$ increases, 
    compared to the bound $S$ on the norm
    of the previous source vector $\vr v$, by a multiplicative factor $\OO(P(N) \cdot \size{Q})$, namely 
    \[
    S' \ \leq \ P(N)\cdot (S+R) + N \ \leq \ (P(N) \cdot (1+\size{Q}) +1) \cdot S,
    \] 
    and therefore the bitsize of the source vector increases by (bounding roughly) at most $\OO(\log (P(N) \cdot \size{Q})) \cdot d$.
    As the number of iterations is bounded by $\size Q$, the final bitsize is polynomial in the input size,
    and therefore all the iterations are doable in \pspace.
    Note that due to iterative increase of $S'$, the first (resp.~last) component not bounded
    (resp.~not reverse bounded) by $P(N')\cdot (S'+R)$  in some iteration may become bounded (resp.~reverse bounded) in
    subsequent iterations.
    
\smallskip

    Finally, the procedure arrives at a sequential \parvass G $(\V, s, t)$,
    $s=q(\vr v)$, $t = p(\vr w)$,
    whose first component is \emph{not bounded} by $P(N)\cdot (S+R)$, and
    whose last component is \emph{not reverse bounded} by $P(N)\cdot (S+R)$.
    As 
    $G$ is \fair{},
    there is a configuration $c$ reachable from $s$ in the first component,
    and a configuration $c'$ reverse reachable from $t$ in the last component
    (i.e., $t$ is reachable from $c'$), both with vectors $\geq \const{S+R+1}$: 
    \[
    q(\vr v) \trans{} c \qquad\qquad
    c' \trans{} p(\vr w).
    \]
    As $R$ is large enough so that every two states in the same component
    are connected by a path that decreases each counter by at most $R$,
    we have some runs
    \[
    c\trans{} q(\overline {\vr v}) \qquad \qquad
    p(\overline {\vr w}) \trans{} c'
    \]
    in the first and the last component, respectively, with $\overline{\vr v}, \overline {\vr w} \geq \const{S+1}$.
    As $\norm {s}, \norm {t} \leq S$, this means that $\V'$ is pumpable. 
    Since $G$ is transitive, Lemma \ref{lem:transitive} applies.    
    In the last step, the decision procedure checks if there is a \Zrun from $s$ to $t$,
    which is doable in \NP due to Lemma \ref{lem:zrun}.
\end{proof}

\section{Symmetric group} \label{sec:symg}

In this section we prove \pspace-completeness
of the reachability problem in case of
the symmetric groups $\symg d$,  $d\geq 2$.
Our \pspace upper bound works not only for arbitrary fixed dimension $d\geq 2$, but even
in the uniform setting when $d$ is part of input. 
When $d=1$, the problem is known to be \NP-complete \cite{HKOW09}.

The upper bound is shown using Lemma \ref{lem:pspace}, which requires proving that $\symg d$
satisfies its assumptions. 
The symmetric groups are obviously transitive, it is thus sufficient to show:
\begin{lemma}\label{lem:sd_bal}
    $\symg d$ is \fair, for every $d\in\Npos$.
\end{lemma}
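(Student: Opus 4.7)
My plan is to establish fairness of $\symg d$ by a symmetric-pumping argument, with a polynomial $P(N)$ of the form $\text{poly}(d, N)$. I would proceed in three steps.

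First, I would extract a positive-sum cycle from any witnessing run. Given a run $\pi \colon q(\vr v) \to p(\vr w)$ of a \parvass{\symg d} with $\norm{\vr w} > P(N)(S+R)$, the sum-effect of $\pi$ satisfies $\sum_i \vr w(i) - \sum_i \vr v(i) \geq \norm{\vr w} - dS$, which is very large. Decomposing the walk of $\pi$ in the state graph into a simple path from $q$ to $p$ (of length at most $|Q|-1$) plus a multiset of simple cycles, the simple path contributes at most $(|Q|-1) \cdot dN$ to the sum-effect, so for $P$ chosen with enough slack at least one cycle $L$ at a state $q^*$ visited by $\pi$ must have $\Sigma(L) := \sum_i \vr e_L(i) \geq 1$.

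Next, I would symmetrize $L$. By $\symg d$-invariance of $\V$'s transitions, for each $\sigma \in \symg d$ the permuted cycle $\sigma(L)$, obtained by replacing every effect $\vr e_j$ of $L$ by $\sigma(\vr e_j)$, is again a cycle at $q^*$ in $\V$. Concatenating all $d!$ permuted versions yields a walk $\widetilde L$ at $q^*$ with total effect
\[
\sum_{\sigma \in \symg d} \sigma(\vr e_L) \;=\; (d-1)! \cdot \Sigma(L) \cdot \const 1,
\]
a strictly positive constant vector, by the same uniform-distribution calculation as in the proof of Lemma \ref{lem:transitive}.

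Finally, I would boost and execute. To use $\widetilde L$ as a valid run I must reach $q^*$ with a configuration high enough to absorb its intermediate dips, which are bounded by a quantity $B$ polynomial in $|Q|$, $N$ and $d$. The plan is to follow the prefix of $\pi$ to reach $q^*$, then perform a boosting phase by iterating $L$ and its permuted copies $\sigma(L)$ in a round-robin schedule: since a complete round has positive constant effect $(d-1)! \cdot \Sigma(L) \cdot \const 1$, each coordinate grows on average by at least one per round, and after enough rounds the configuration exceeds $\const{B + S + R + 1}$; a single application of $\widetilde L$ then certifies (2). The hard part will be the first round of boosting, where iterating $L$ alone can depress coordinates on which $\vr e_L$ is negative; the polynomial $P(N)$ must be calibrated so that the prefix of $\pi$ already arrives at $q^*$ with enough slack on every coordinate to tolerate the worst-case intermediate dip, after which subsequent rounds proceed by induction.
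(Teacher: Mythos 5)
Your step 1 already conflicts with the definition of \fair{}: the polynomial $P$ is fixed before quantifying over all \parvass{\symg d}, and is applied to $N = \norm{\V}$ only, so it cannot depend on $\size Q$. To guarantee that the cyclic part of your decomposition carries a positive sum, you need the threshold $P(N)\cdot(S+R)$ to exceed the contribution of the acyclic part, which can be as large as $(\size Q - 1)\cdot d\cdot N$; since $\size Q$ is unbounded in terms of $N$ and $S$, no admissible $P$ gives you "enough slack". Concretely, a $\symg d$-invariant \vass can reach a configuration with one huge coordinate along a single long simple path built from norm-$1$ transitions, with no positive-sum cycle anywhere. (Such an instance does satisfy (2) --- the transition $(q,(1,0,\dots,0),q')$ comes with all its permutations, so the increments can be spread over all coordinates --- but your argument cannot see this, precisely because it looks for cycles instead of modifying the given run.)

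The second, independent gap is in step 3, and you have named it yourself without resolving it: to execute $L$ and its $d!$ permuted copies you must reach $q^*$ with every coordinate above a bound of order $d!\cdot\size{L}\cdot N$, but hypothesis (1) controls only the norm of the \emph{final} vector of $\pi$; it says nothing about the configuration at the visit to $q^*$ where $L$ is anchored, which may be $0$ on several coordinates. "Calibrating $P$ so that the prefix arrives with slack on every coordinate" is circular --- a reachable configuration with all coordinates large is exactly conclusion (2). The paper's proof avoids both problems by never extracting cycles: it keeps the run $\pi$ and, for each deficient coordinate $i$ in turn, applies the transposition of $i$ and $d$ to a \emph{minimal} set of steps chosen inside a suffix along which coordinate $d$ stays above $B\cdot(d-i)$. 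Minimality caps the loss on coordinate $d$ at $3S+R$, while coordinate $i$ only gains at each swapped step (these satisfy $\vr e_\ell(d) > \vr e_\ell(i)$), so validity of the modified run is inherited from $\pi$ with no headroom assumption, and the constant polynomial $P(x)=3d$ suffices. You would need to replace your cycle-and-boost scheme by some such surgery on the witnessing run itself.
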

\begin{proof}
    Consider an \svass{$\symg d$} $\V = (Q, T)$ and a configuration $s = q(\vr v)$.
    Let $S := \norm{\V, s}$,
    let $R\in\N$ be any constant, and
    $B := 3S + 2R$.
    We demonstrate that $\symg d$ is \fair, and that this property
    is witnessed by the constant polynomial $P(x)=3d$. 
    Towards this we prove a slightly stronger fact:
    we assume that $\V$ has a run $\pi \colon q(\vr v) \trans{} p(\vr w)$ with $\norm{\vr w} > B\cdot d$,
    and construct a run $\pi' : q(\vr v) \trans{} p(\vr w')$ whose target vector satisfies 
    $\vr w' \geq \const{S+R+1}$. 

    \Wlog~assume that $\vr w(d) > B\cdot d$.
    For $i = 1, \ldots, d-1$ we inductively construct runs $\pi_i \colon q(\vr v) \to p(\vr w_i)$ satisfying
    \begin{align} \label{eq:wi}
    \vr w_i(d) > B\cdot (d-i) \qquad \qquad 
    \vr w_i(j) > S + R \ \ (\text{for all } j \leq i).
    \end{align}
    We start with $\pi_0 := \pi$.
    Assuming $\pi_{i-1}$ has already been constructed, we construct $\pi_{i}$ as follows.
    If $\vr w_{i-1}(i) > S + R$ then we take $\pi_{i} := \pi_{i-1}$.
    Otherwise, we split $\pi_{i-1}$ into 
    $\pi_{i-1} \ = \ \alpha ; \beta,$
    \[
    q(\vr v) \trans{\alpha} r(\vr u)
    \qquad\qquad
    r(\vr u) \trans{\beta} p(\vr w_{i-1}),
    \]
    where 
    $\beta$ is the longest suffix of the run
    in which the value of coordinate $d$ does not drop below $B\cdot(d-i)$.
    \begin{figure}[th]
    \begin{tikzpicture}
        \begin{axis}[
            axis line style = {draw=none},
            xlabel = {},
            ylabel = {},
            samples = 100,
            domain = 0:6,
            enlargelimits=true,
            xtick = \empty,
            ytick = \empty,
            y post scale=0.6
        ]
            \addplot[blue, dashed] {0.3*x^3 - 2.5*x^2 + 5.5*x + 0.5};

            \addplot[only marks, mark=*, mark size = 1pt] coordinates {(4.7,2.22)} node[above right] {};
            \addplot[only marks, mark=*, mark size = 1pt] coordinates {(0,0.5)} node[above left] {};
            \addplot[only marks, mark=*, mark size = 1pt] coordinates {(6,8.2)} node[above right] {};

            \node at (axis cs:2.65, 2.9) [anchor=south west] {\scriptsize $\vr u(d) \geq B\cdot(d-i)$};
            \node at (axis cs:0.2, 0.2) [anchor=south west] {\scriptsize $\vr v(d)$};
            \node at (axis cs:2.3, 6.9) [anchor=south west] {\scriptsize $\vr w_{i-1}(d) > B\cdot(d-i+1)$};
        
            \draw[thin] (axis cs:0,0) -- (axis cs:4.5,0) node[midway,above] {$\alpha$};
            \draw[thin] (axis cs:4.7,0) -- (axis cs:6,0) node[midway,above] {$\beta$};

            \draw[thin] (axis cs:0,-0.3) -- (axis cs:0,0.3);
            \draw[thin] (axis cs:4.5,-0.3) -- (axis cs:4.5,0.3);
            
            \draw[thin] (axis cs:4.7,-0.3) -- (axis cs:4.7,0.3);
            \draw[thin] (axis cs:6,-0.3) -- (axis cs:6,0.3);
        \end{axis}
    \end{tikzpicture}
\label{fig:sd_step}
\end{figure}

\noindent
    Therefore, $\vr u(d) < B\cdot(d-i)+S$, as otherwise $\beta$ would be longer.
    Let $\Delta = \vr w_{i-1} - \vr u$ be the difference between the target and source of $\beta$.
    We observe that the value of coordinate $d$
    is increased by $\beta$ by at least
    \[
     \Delta(d) \ > \ B\cdot(d-i+1) - B\cdot(d-i)-S = 2S+2R,
    \]
    while the value of coordinate $i$ is increased by at most 
    $
    \Delta(i) \ \leq \ \vr w_{i-1}(i) \ \leq \ S+R.
    $
    Therefore, $\Delta(d) - \Delta(i) > S+R$.
    In consequence, $\beta$ contains some $k$ steps
    \[
    q_1(\vr u_1) \trans{t_1} q'_1(\vr u'_1) \quad \ldots \quad
    q_k(\vr u_k) \trans{t_k} q'_k(\vr u'_k)
    \]
    induced by transitions $t_1 = (q_1, \vr e_1, q'_1), \ldots, t_k = (q_k, \vr e_k, q'_k)$
    with effects 
    \[
    \vr e_1  \ = \ \vr u'_1 - \vr u_1 \qquad \ldots \qquad \vr e_{k} \ = \ \vr u'_k - \vr u_k,
    \]
    such that $\vr e_\ell(d) > \vr e_\ell(i)$ for $\ell \in \setto {k}$, that is, each of the steps increases 
    the difference between 
    the values of coordinate $d$ and $i$, and moreover
    \begin{align} \label{eq:NR1}
    \vr e_1(d) + \ldots + \vr e_k(d) \ > \ \vr e_1(i) + \ldots + \vr e_k(i) + S+R,
    \end{align}
    that is, the steps jointly increase the difference between the values of coordinate $d$ and $i$ 
    by more than $S+R$.
    We may choose a \emph{minimal} set of transitions $t_1, \ldots, t_k$ satisfying \eqref{eq:NR1}.
    Hence, knowing that one step can change the difference between the values of coordinate $d$ and $i$
    by at most by $2S$, we may safely assume
    \begin{align} \label{eq:3NR}
    \vr e_1(d) + \ldots +\vr e_k(d) \ \leq \ \vr e_1(i) + \ldots + \vr e_k(i) + 3S+R,
    \end{align}
    that is, the steps jointly increase the difference between the values of coordinate $d$ and $i$ 
    by at most $3S+R$.
    Let $\sigma = {\tiny \big( \begin{array}{c} i \ \ d \\ d \  \ i \end{array} \big)} \in S_n$  
    be the permutation that swaps coordinates $d$ and $i$ and preserves the others.
    We define a run $\pi_{i} \colon q(\vr v) \runarrow p(\vr w_{i})$
    by replacing each transition $t_\ell$ in $\pi_{i-1}$ with the transition $\sigma(t_\ell)$.
    Due to the inequality \eqref{eq:3NR},
    the coordinate $d$ never drops below 
    \[
    \vr u(d) - (3S+R)  \ >\  B\cdot(d-i) - (3S+R) \ > \ 0
    \] 
    in $\pi_{i}$, 
    and $\vr w_{i}(d) > B\cdot(d-i+1) - (3S+R) > B\cdot(d-i)$.
    Furthermore, due to the inequality \eqref{eq:NR1} we get $\vr w_{i}(i) > S + R$, 
    because the coordinate $i$ was increased by more than $S+R$.
    Other coordinates are not affected, i.e., their values are the same in $\pi_{i-1}$ and $\pi_{i}$.
    Therefore, the run $\pi_{i}$ satisfies the condition \eqref{eq:wi}, as required.
    
    Finally, we arrive at a run $\pi' = \pi_{d-1} \colon q(\vr v) \runarrow p(\vr w_{d-1})$
    with $\vr w_{d-1}(i) \geq S+R+1$ for every $i \in \setto d$, and this completes the proof.
\end{proof}

\begin{theorem}\label{thm:Sn}
    \reachprob{\symg d} is \pspace-complete, for every $d\geq 2$.
\end{theorem}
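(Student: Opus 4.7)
The plan is to prove the two directions of \pspace-completeness by invoking the machinery developed in the preceding sections, with almost no additional work.

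For the \textbf{lower bound}, I would simply instantiate Lemma~\ref{lem:hardness} with $G=\symg d$. Since that lemma gives \pspace-hardness of \reachprob G for every $d\geq 2$ and every $G\leq\symg d$, and $\symg d$ is itself one such group, \pspace-hardness of \reachprob{\symg d} follows at once.

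For the \textbf{upper bound}, I would apply Lemma~\ref{lem:pspace}, which says that \reachprob G is in \pspace whenever $G$ is transitive and fair. Transitivity of $\symg d$ is immediate: for any $i,j\in\setto d$, the transposition swapping $i$ and $j$ (and fixing everything else) belongs to $\symg d$ and maps $i$ to $j$. Fairness of $\symg d$ is exactly the content of Lemma~\ref{lem:sd_bal}. Hence the assumptions of Lemma~\ref{lem:pspace} are satisfied, and \reachprob{\symg d} lies in \pspace.

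Combining the two directions yields \pspace-completeness of \reachprob{\symg d} for every $d\geq 2$, as claimed. The only reason this proof is so short is that all the real work has been packed into the earlier lemmas: the nontrivial \textbf{obstacle} is Lemma~\ref{lem:sd_bal} (establishing the fairness polynomial $P(x)=3d$ via the coordinate-swapping argument that uses the transposition $\sigma=(i\ d)$ to convert excess on coordinate $d$ into value on coordinate $i$ while controlling the loss), together with the pumpability-to-$\Z$-reachability reduction (Lemma~\ref{lem:transitive}) that underlies Lemma~\ref{lem:pspace}. At the level of the theorem itself, however, nothing further is needed beyond assembling these ingredients.
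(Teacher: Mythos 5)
Your proof is correct and follows exactly the paper's own argument: the upper bound is obtained by combining Lemma~\ref{lem:pspace} with the transitivity of $\symg d$ and its fairness (Lemma~\ref{lem:sd_bal}), and the lower bound is the instantiation of Lemma~\ref{lem:hardness}. Nothing further is needed.
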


\begin{proof}
    Membership in \pspace follows immediately by Lemmas \ref{lem:pspace} and \ref{lem:sd_bal}, and
    \pspace-hardness follows by Lemma \ref{lem:hardness}.
\end{proof}


\section{Combining groups} 
\label{sec:comb}

In this section we investigate groups arising as combinations of smaller 
permutation groups.
The way of combining two permutation groups $G\leq \symg g$ and $H\leq\symg h$
will be their \emph{wreath product} $\wrprod G H \leq \symg {g h}$.
The set $\setto{g h} = \{1, \ldots, g h\}$ we conveniently split into $h$ \emph{blocks}:
\[
\setfromto{1}{g} \quad
\setfromto{g+1}{2g} \quad \ldots \quad
\setfromto {g(h-1)+1}{gh},
\]
and write $\pair i j$ to denote $g(j-1) + i\in\setto{gh}$, for $i\in\setto g$ and $j\in\setto h$.
Intuitively, $\pair i j$ stands for the $i$th element of the $j$th block.
Given a $h$-tuple $(\sigma_1, \ldots, \sigma_h)$ of permutations from $G$
and one permutation $\delta$ from $H$, 
we define the permutation
$\wrtuple{\sigma_1, \ldots, \sigma_h} \delta \in \symg {g\cdot h}$
that applies, for each $j$,
the permutation $\sigma_j$ on the block $\setfromto{g(j-1)+1}{gj}$ independently,
plus permutes the blocks according to $\delta$,
namely
\[
\wrtuple{\sigma_1, \ldots, \sigma_h} \delta \ : \ \pair i j \ \ \longmapsto \ \ \pair {\sigma_j(i)} {\delta(j)}.
\]
The \emph{wreath product}%
\footnote{
The operation is definable more concisely, as
semidirect product of $h$-fold direct product of $G$ with $H$.
}
 of $G$ and $H$ consists of all so constructible permutations
of $\setto{gh}$:
\[
\wrprod G H \ := \ 
\setof{\wrtuple{\sigma_{1}, \ldots, \sigma_{h}} \delta}{\sigma_1, \ldots, \sigma_h \in G, \ \delta\in H}.
\]

In the sequel we are mostly interested in combining trivial permutation groups $\tg n$
with symmetric groups $\symg d$, and relating the complexity of the reachability
problem for the combined group to the complexity of the problem for
component groups.
As we show in Theorems \ref{thm:large} and \ref{thm:small} below,
it turns out that the complexity significantly depends on the order of composing the groups.
In the case of the group $G = \wrprod {\tg n} {\symg d}$ which allows for arbitrary permutations
of $d$ blocks (of size $n$ each) but disallows permutations inside blocks, the complexity of \reachprob G 
is at least as high as the complexity of
\reachprob {\big((n-1)d\big)}
(Theorem \ref{thm:large}).
On the other hand, in the dual case of the group $G = \wrprod {\symg d} {\tg n}$, 
which disallows permutations of $n$ blocks (of size $d$ each)
but allows arbitrary simultaneous independent permutations inside blocks,
we prove (in Theorem \ref{thm:small} below) that the  complexity of \reachprob G 
drops to the complexity of
\reachprob {n}.
Summing up, in the former case 
the impact of symmetric group $\symg d$ is essentially nullified as
both degrees $n$ and $d$ contribute significantly to the complexity
of the combined group, while in the latter case 
the impact of symmetric group $\symg d$ is preserved,
as the degree $d$ of the symmetric group
is irrelevant for complexity.
 We notice that the difference of complexities is in agreement with the fact
 that up to the isomorphism $\pair i j \mapsto \pair j i$  of permutation groups, 
$\wrprod {\tg n} {\symg d}$ is a subgroup of 
$\wrprod {\symg d} {\tg n}$.

\para{Trivial symmetry inside blocks}

The first result we formulate slightly more generally, 
with an arbitrary group $G\leq\symg d$ in
place of the symmetric group $\symg d$.
The reachability problem
for wreath product of the trivial permutation group $\tg n$ 
with any  $G\leq \symg d$, 
regardless of the choice of $G$ (which seems to be quite surprising),
is at least as hard as the reachability problem
for \vass of dimension $(n-1)d$:

\begin{theorem}\label{thm:large}
    \reachprob{\big((n-1)d\big)}
    reduces in polynomial time
    to \reachprob{(\tg n \wr G)},
    for every $n\geq 2$, $d\geq 1$ and $G\leq\symg d$.
\end{theorem}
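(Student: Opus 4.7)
The plan is as follows. First, I would observe that it suffices to handle the case $G = \symg d$: for any $G \leq \symg d$ the containment $\tg n \wr G \leq \tg n \wr \symg d$ makes every $\parvass{(\tg n \wr \symg d)}$ automatically a $\parvass{(\tg n \wr G)}$, so a reduction to the maximally symmetric target transfers to every subgroup.

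Given a general $\parvass{((n-1)d)}$ instance $(\V, s, t)$ with $\V = (Q, T)$, I would construct $(\V', s', t')$ as follows. The $nd$ coordinates of $\V'$ are organized as an $n \times d$ grid partitioned into $d$ blocks of $n$ coordinates each. The $(n-1)d$ counters of $\V$ are embedded into the first $n - 1$ rows, and the $n$-th row is designated for markers used to distinguish blocks. The source $s'$ and target $t'$ carry $\V$'s data in the first $n - 1$ rows and a fixed block-identifying pattern in the marker row, for example $M_j := j$ scaled by a large constant at coordinate $(n, j)$, chosen so that the canonical endpoints break the block-permutation symmetry. For each $\V$ transition $\tau = (p, \vr v, q)$ I would add a gadget consisting of a sequence of sub-transitions through fresh auxiliary states that performs the data modifications of $\vr v$ interleaved with marker-adjustment steps.

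The marker adjustments are designed so that the \emph{canonical} execution of the gadget (using the identity permutation throughout) preserves the prescribed marker pattern, whereas any permuted sub-transition is either blocked outright by the nonnegativity constraint on counters or yields a marker configuration from which the prescribed target marker values are unreachable. The forward direction of the reduction is then immediate: any $\V$ run lifts to a $\V'$ run by applying the canonical gadget for each transition of $\V$.

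The main obstacle is the backward direction: showing that every $\V'$ run from $s'$ to $t'$ projects back to a $\V$ run. I would establish this by a careful inductive analysis of the marker dynamics along the positions $j = 1, 2, \ldots, d$. The core combinatorial fact is that for any sequence of permutations $\delta_1, \ldots, \delta_k \in \symg d$ compatible with the marker arithmetic, the constraint at $j = 1$ (where each $\delta_i^{-1}(1) \geq 1$ with the sum forced to saturate by the target marker) pins every $\delta_i$ to fix $1$; a symmetric argument at $j = d$ fixes $d$; and iterating inductively on $j = 2, \ldots, d-1$, restricted to permutations already known to fix the previously treated values, forces every $\delta_i$ to be the identity. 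The $\V'$ run therefore uses only canonical gadgets and projects to a $\V$ run with the same data effect, completing the correctness argument.
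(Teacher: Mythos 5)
Your construction skeleton coincides with the paper's: the $nd$ coordinates are arranged as an $n\times d$ grid with the $(n-1)d$ data counters in the first $n-1$ rows and a marker row used to break the block symmetry, each transition of $\V$ is simulated by a gadget threaded through fresh auxiliary states, and the source/target configurations carry a fixed block-identifying marker pattern. (Your preliminary reduction to the case $G=\symg d$ is sound for fixed $d$, since each $(\wrprod{\tg n}{\symg d})$-orbit splits into at most $d!$ many $(\wrprod{\tg n}{G})$-orbits, but it is also unnecessary: the paper's forcing argument uses only nonnegativity of counters, so the same construction works verbatim for every $G\leq\symg d$.)

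The gap is in the backward direction, which is the entire content of the theorem. You never commit to concrete marker values or marker-adjustment vectors, and the ``core combinatorial fact'' you invoke does not amount to a proof: the condition $\delta_i^{-1}(1)\geq 1$ holds for every permutation of $\setto d$, the ``sum forced to saturate by the target marker'' is undefined without the adjustment vectors, and your fallback case --- a permuted sub-transition that is not blocked immediately but ``yields a marker configuration from which the prescribed target marker values are unreachable'' --- would require a global unreachability argument that you do not supply and that is genuinely hard to supply, since block permutations preserve the multiset of marker values, so the target marker pattern is not excluded by any obvious invariant. The paper avoids all of this with a purely \emph{local} mechanism: block $j$ carries marker $j-1$; the sub-transition responsible for block $k$ subtracts $d-1$ from the marker of the block it acts on and adds $1$ to the markers of all other blocks; consequently, at every point of a gadget exactly one block has marker $\geq d-1$, so nonnegativity alone forces each sub-transition to act on that unique block, the forced order is $d,d-1,\ldots,1$, and every segment is the canonical one. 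To close your argument you would need to exhibit marker dynamics with this kind of step-by-step forcing property; the induction over positions $j=1,\ldots,d$ as you describe it (fix $1$, then $d$, then the rest) does not establish it.
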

\begin{proof}
    Given a  \vass{}  $\V=(Q,T) $ of dimension $(n-1)d$, we define
    a \svass{$(\tg n \wr G)$} $\V'=(Q',T')$ simulating $\V$,
    whose states extend $Q$ by a number of auxiliary states: 
    \[Q' \ = \ Q \ \cup \ \setof{\tuple{q,i,t}}{q \in Q, i \in \setto{d-1}, t \in T}.\]
    A configuration $q(\vr w)\in Q\times \N^{(n-1)d}$ of $\V$, will be simulated
    by a configuration $\overline{q(\vr w)} = q(\overline{\vr w}) \in Q \times \N^{n d}$ of $\V'$ whose vector 
    $\overline{\vr w}$ extends $\vr w$, on the last $n$th
    coordinate of every block, by
    \begin{align} \label{eq:over}
    \overline{\vr w}(n, j) = j-1 \qquad (\text{for } j \in \setto d)
    \end{align}
    (we write $\overline{\vr w}(n, j)$ instead of $\overline{\vr w}(\pair n j)$, for readability).
    We define transitions $T'$ of $\V'$ by its orbit representatives $\widetilde T \subseteq T'$. 
    For every vector $\vr v \in \Z^{(n-1)d}$ we define $d$ vectors $\vr v_1, \ldots, \vr v_d \in \Z^{nd}$ as follows:
    \[
    \vr v_k(i, j) \ = \ \begin{cases}
    -d+1 & i = n, j=k \\
    1 & i = n, j \neq k \\
    \vr v(i, j) & i < n, j = k \\
    0 & i<n, j\neq k.
    \end{cases}
    \]
    Thus, the value $\vr v_k(n,k)$ on the last coordinate of $k$th block equals $-d+1$, 
    while on the last coordinate of other blocks equals  $1$.
    Furthermore, $\vr v_k$ coincides with $\vr v$ on the remaining $n-1$ coordinates of $k$th block, 
    while it is 0 on the remaining $n-1$ coordinates of other blocks.
    For every  transition $t = (p, \vr v, q) \in T$,
    we put to the set $\widetilde T$ the following $d$ transitions:
    \begin{align*}
    t_d  \ = \ & (p, \vr v_d, \tuple{q,d-1,t}) &
    t_{d-1} \ = \ & (\tuple{q, d-1, t}, \vr v_{d-1}, \tuple{q,d-2,t}) \\
    & \ldots \\
    t_{2} \ = \ & (\tuple{q, 2, t}, \vr v_2, \tuple{q,1,t}) &
    t_{1} \ = \ & (\tuple{q, 1, t}, \vr v_1, q).    
    \end{align*}
    We need to prove that there is a run $p({\vr w}) \trans{} q({\vr v})$ in $\V$ if and only if
    there is a run $\overline {p(\vr w)} \trans{} \overline {q(\vr v)}$ in $\V'$.
    The `only if' direction follows due to observation 
    that every step $p({\vr w}) \trans{t} q({\vr v})$ in $\V$
    is simulated
    by a sequence of transitions $\alpha_t = t_d, t_{d-1}, \ldots, t_1$, namely
    \begin{align} \label{eq:segm}
    \overline {p(\vr w)} \trans{\alpha_t} \overline {q(\vr v)}
    \end{align} 
    in $\V'$,
    as every transition $t_k$ updates coordinates $1\ldots n-1$ of block $k$, and the effect
    of $\alpha_t$ on last $n$th coordinate of every block is 0.
    
    For the `if' direction, we observe that due to the auxiliary states in $Q'$, every
    run $\overline {p(\vr w)} \trans{} \overline {q(\vr v)}$ in $\V'$, starting and ending in a state from $Q$,
    necessarily splits into \emph{segments} of length $d$, where
    each segment is  of the form 
    \begin{align}\label{eq:toprove}
    \sigma_d(t_d), \sigma_{d-1}(t_{d-1}), \ldots, \sigma_1(t_1),
    \end{align}
    for some $\sigma_1, \ldots, \sigma_d \in \tg n \wr S_d$ and $t = (p, \vr v, q) \in T$.
    In particular, the segment starts in $p\in Q$ and ends in $q\in Q$.
    It is thus enough to argue that 
    whenever $p(\overline {\vr w}) \trans{} q(\vr v')$ in $\V'$
    along a single segment \eqref{eq:toprove}, there is a vector $\vr v$ such that
    $\vr v' = \overline {\vr v}$ and
    $p({\vr w}) \trans{} q({\vr v})$ in $\V$
    ($*$). 

    A single segment updates coordinates $1\ldots n-1$ in all the $d$ blocks, in the order determined by 
    $\sigma_1, \ldots, \sigma_d$.
    The crucial observation is that this order is necessarily $d, d-1, \ldots, 1$.
    Indeed,
    by the definition of the vector $\overline{\vr w}$ in \eqref{eq:over}, the first update is in the $d$th block
    as the last $n$th coordinate is $\geq d-1$ only in this block.
    As a result of this update, the $(d-1)$th block is the only one with value $\geq d-1$ on the last $n$-th coordinate,
    and therefore the second update necessarily happens in this block.
    And so on, until the last update in the first block.
    In consequence, $\sigma_i(t_i) = t_i$ for all $i \in \setto d$.
    Therefore, the segment \eqref{eq:toprove} is necessarily of the form \eqref{eq:segm}, and the implication ($*$) is shown.
    In particular, we have shown that only transitions from $\widetilde T$ can be used in $\V'$ when starting
    from a configuration $\overline{p(\vr w)}$.
\end{proof}

\para{Trivial symmetry between  blocks}

Now we investigate the other case of  wreath product of 
the symmetric  group of degree $d$ with
the trivial permutation group of degree $n$, and show the 
complexity of the reachability problem in this case
to be essentially the same as of the reachability problem
for \vass of dimension $n$.

We start with a definition and a lemma, both stated for any $G\leq \symg n$ in place of $\tg n$.
A configuration $q(\vr w)$ of a \parvass{\big(\symg d \wr G\big)} $\V$
is \emph{$B$-\balrun} if  
\begin{equation}\label{eq:balconf}
\absv{\vr w(i, j) - \vr w(i', j)} \leq B
\end{equation}
for every $i,i' \in \setto d$ and $j \in \setto n$.
In words: in every block, the maximal difference of values is at most $B$.
A run $s\trans{} t$ of $\V$ is $B$-\balrun if  all its configurations are so.
Finally, $(\V,s,t)$ is \emph{$B$-\balrunable} if an existence of a run $s\trans{} t$ implies
an existence of  a run $s\trans{} t$ which is $B$-\balrun.
\begin{lemma}\label{lem:balancing}
    There is 
    a polynomial $P$ such that for every $d, n\geq 1$ and $G\leq\symg n$,
    every \svass{$(\symg d \wr G)$} $(\V, s, t)$ is $P(\norm{\V, s,t})$-\balrunable.
\end{lemma}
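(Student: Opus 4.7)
The plan is to take any run $\pi \colon s \to t$ and iteratively modify it into a balanced one with the same source and target, exploiting a key freedom offered by $(\symg d \wr G)$-invariance: for each transition fired along $\pi$ we may independently choose any representative of its orbit, which in particular lets us apply any permutation of $\symg d$ to the block-$j$ restriction of the transition's effect, independently for each block $j$ and for each transition. We will use this freedom to perform a sequence of elementary swap operations that reduce within-block imbalances while keeping both endpoints fixed.

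The elementary swap will proceed as follows. Fix a block $j$ and two coordinates $i, i' \in \setto d$, and suppose that at some intermediate configuration $q(\vr w)$ of $\pi$ the imbalance $\vr w(i, j) - \vr w(i', j)$ exceeds a threshold $B$, to be chosen polynomial in $\norm{\V, s, t}$. Split $\pi$ at this configuration into a prefix $\alpha$ and a suffix $\beta$. Since the block-$j$ imbalance of $s$ and of $t$ is at most $2\norm{\V, s, t} \ll B$, the prefix $\alpha$ must contain transitions whose block-$j$ effect favors coordinate $i$ over $i'$ with a large total excess, and symmetrically the suffix $\beta$ must contain transitions whose effect favors $i'$ over $i$ with matching excess. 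I would select paired subsets of such transitions in $\alpha$ and in $\beta$, and apply to each the permutation $\sigma \in \symg d \wr G$ that swaps $i$ and $i'$ in block $j$ while acting as the identity on all other coordinates. The modifications in $\alpha$ and $\beta$ cancel globally, so the resulting sequence of steps still starts at $s$ and ends at $t$, while the imbalance at the split configuration strictly decreases. Iterating this move across all blocks and all coordinate pairs then yields a run with within-block imbalance bounded by $B$ everywhere.

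The main obstacle will be to verify that the modified step sequence remains a genuine run, i.e., that every coordinate stays nonnegative throughout. The subtlety is that a swap modifies the trajectories of $(i, j)$ and $(i', j)$ at every step between the swapped transitions in $\alpha$ (and in $\beta$), not only at the endpoints. The plan is to localize each elementary swap to a pair of transitions (one in $\alpha$, one in $\beta$) whose individual block-$j$ effect on the coordinate pair is bounded by $\norm{\V}$, and to restrict applications to moments where $(i, j)$ already exceeds $(i', j)$ by more than $\norm{\V}$, so that the dip induced by a single swap is absorbed without violating nonnegativity. A potential function summing excess imbalance across all blocks and pairs of coordinates, together with the observation that each swap strictly decreases it, should deliver termination and the polynomial bound $P(\norm{\V, s, t})$ uniform in $d$, $n$, and $G$.
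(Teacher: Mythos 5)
Your overall strategy coincides with the paper's: process blocks one at a time, and repeatedly cancel an excessive imbalance between coordinates $(i,j)$ and $(i',j)$ by applying the transposition of $i$ and $i'$ inside block $j$ to selected transitions of a prefix $\alpha$ and a suffix $\beta$ surrounding the offending configuration, so that source and target are preserved and a potential function decreases. The genuine gap is in the cancellation mechanism. For the target $t$ to be preserved \emph{exactly}, the swapped transitions in $\alpha$ and those in $\beta$ must contribute exactly opposite totals to the pair of coordinates $(i,j)$, $(i',j)$: swapping one transition of $\alpha$ that increases the difference $\vr u(i,j)-\vr u(i',j)$ by $a$ and one transition of $\beta$ that decreases it by $b$ does not cancel unless $a=b$, and nothing guarantees that matching values are available. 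Your ``paired subsets'' and, later, ``a pair of transitions (one in $\alpha$, one in $\beta$)'' leave this unresolved. The paper's fix is a pigeonhole argument: a step changes the difference by one of at most $2N$ positive values (where $N=\norm{\V,s,t}$), and by choosing the infix around the offending configuration to be the maximal one on which the difference stays above $B_0=4N^2$ (with the threshold $B$ exceeding $B_0$ by about $8N^3$), both $\alpha$ and $\beta$ are forced to contain at least $4N^2$ increasing (resp.\ decreasing) steps; hence $\alpha$ has $2N$ steps with the same increment $a$ and $\beta$ has $2N$ steps with the same decrement $b$, and swapping $b$ of the former and $a$ of the latter yields contributions $-2ab$ and $+2ab$ that cancel.

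This repair also breaks your nonnegativity bookkeeping as stated: once up to $2N$ transitions are swapped on each side, the transient dip of coordinate $(i,j)$ can be as large as $2N\cdot 2N=4N^2$, so restricting swaps to moments where the imbalance exceeds $\norm{\V}$ is insufficient; one needs the imbalance to stay above $4N^2$ throughout the whole modified infix, which is exactly what the maximal-infix choice provides. Finally, your potential function requires the extra check that a swap on the pair $(i,i')$ does not increase the terms involving a third coordinate $\ell$; this does hold, because the changes applied to $\vr u(i,j)$ and $\vr u(i',j)$ are opposite, so $\absv{\vr u(i,j)-\vr u(\ell,j)}+\absv{\vr u(i',j)-\vr u(\ell,j)}$ cannot increase, but it does not follow merely from the fact that the targeted imbalance strictly decreases.
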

\begin{proof}
    Let $(\V, s, t)$ be an arbitrary \svass{$(\symg d \wr G)$}, and
    let $N := \norm{\V, s, t}$, $B_0 := 4N^2$, and $B := B_0  + 2N + 8N^3$.
    Suppose $\V$ has a run $s \trans{\pi} t$.
    We iteratively modify this run until it becomes $B$-\balrun.
    
    By definition $\norm s, \norm t < B_0 \leq B$, and therefore 
    the source configuration $s$ and the target one $t$ are $B$-\balrun.
    We make $\pi$ \balrun in $n$ stages.
    For each $j \in \setto n$, in an arbitrary order, we modify the run $\pi$ to make it \balrun inside the $j$th block, 
    i.e., to make all configurations of $\pi$ satisfy \eqref{eq:balconf} where $i,i'$ range over $\setto d$ but
    $j$ is fixed.
    As the modification does not affect the other blocks,
    the final run is therefore \balrun.

    From now on
    we concentrate on one stage, for a fixed $j\in\setto n$.    
    For a vector $\vr u\in\Z^{dn}$ and $i,i'\in \setto d$, we use the notation 
    $\diff{ii'}{\vr u} := \vr u(i, j) - \vr u(i',j)$ for the difference between the values of $\vr u$ 
    on coordinates $\pair i j$
    and $\pair {i'} j$.
    The stage amounts to iteratively modifying the run until it finally becomes \balrun inside the $j$th block.
    Each such individual modification we call a \emph{microstage}.
    Below we describe a single microstage, and argue that it decreases a certain nonnegative rank,
    which ensures termination of the whole stage.
     
     \para{Microstage}
    Suppose the run $\pi$ is not \balrun in the $j$th block, namely $\pi$ contains a configuration $p(\vr v)$
    such that $\diff{ii'}{\vr v}  > B$
    for some $i, i' \in \setto d$.
    Consider the longest infix
    \begin{align} \label{eq:inf}
    q(\vr w) \trans{\alpha} p(\vr v)\trans{\beta} q'(\vr w')
    \end{align}
    of  $\pi$, 
    such that all its  configurations $p'(\vr v')$
    satisfy $\diff{ii'}{\vr v'} \geq B_0$.
    Since one step may change the difference between the values of coordinates $(i, j)$ and $(i', j)$ 
    by at most by $2N$, we notice that $\diff{ii'}{\vr w} < B_0 + 2N$, as otherwise $\alpha$ would be longer.
    Likewise, $\diff{ii'}{\vr w'} < B_0 + 2N$.
    In consequence $B - \diff{ii'}{\vr w} \geq 8N^3$ and $B - \diff{ii'}{\vr w'} \geq 8N^3$, and
    therefore $\alpha$ contains at least $4N^2$ steps that increase the difference
    between the values of coordinates $\pair i j$ and $\pair {i'} j$ and similarly $\beta$ contains
    at least $4N^2$ steps that decrease this difference.
    As there are at most $2N$ different values by which 
    a single step can increase (or decrease) this difference,
    by the pigeonhole principle $\alpha$ contains some $2N$ steps 
    that increase this difference by the same value $a$,
    and likewise $\beta$ contains some  $2N$ steps that decrease this difference by the same value $b$.
    We pick any $b$ among the $2N$ steps in $\alpha$, executing transitions 
    $t_1, \ldots, t_b$, say, and replace them by $\sigma(t_1), \ldots, \sigma(t_b)$,
    where $\sigma\in \wrprod{\symg d}{G}$ is a permutation that swaps $(i, j)$ and $(i', j)$.
    Likewise, we choose any $a$ among the $2N$ steps in $\beta$,
    executing transitions $t_1', \ldots, t_a'$, say,
    and replace them by $\sigma(t_1'), \ldots, \sigma(t_a')$.
    Importantly, this modification of the run does not change its target configuration.
    The only coordinate whose value decreases in some configurations along the run is $(i, j)$,
    but it never drops below zero since the decrease is never larger than $4N^2$, and $B_0 \geq 4N^2$.
    Therefore, the modification produces a run.
    
    We claim that the following nonnegative rank decreases as a result of the microstage
    (where the first sum ranges over configurations $r(\vr u)$ along the run):
    \begin{equation*}
        \sum_{r(\vr u)} \ \ \sum_{\ell < \ell' \in \setto d} \  \absv{\diff{\ell\ell'}{\vr u}},
    \end{equation*}
    on the basis of the following three observations.
    First, irrespectively of the choice of $r(\vr u)$ along the run, the value $\absv{\diff{ii'}{\vr u}}$ never increases, 
    and decreases for some $r(\vr u)$; indeed,
    for $p(\vr v)$ in \eqref{eq:inf}, the value $\absv{\diff{ii'}{\vr v}} = \absv{\diff{i'i}{\vr v}}$ decreases
    by $ab > 0$ (but $\diff{ii'}{\vr v}$ remains positive).
    Second, in all cases when the value $\absv{\diff{ii'}{\vr u}}$ changes (i.e., decreases) in the microstage,
    this is due to adding opposite values to $\vr u(i,j)$ and $\vr u(i',j)$.
    Therefore, for every $\ell \notin \{i, i'\}$ and every configuration $r(\vr u)$ in $\pi$, the value of
    $
        \absv{\diff{i\ell}{\vr u}} + \absv{\diff{i'\ell}{\vr u}}
    $
    does not increase.
    Finally, for $\ell,\ell'\notin\{i,i'\}$, the value of $\diff{\ell\ell'}{\vr u}$ is preserved.
    In consequence, the rank of the whole run decreases, as required.
\end{proof}

\begin{theorem} \label{thm:small}
    \reachprob{(\symg d \wr \tg n)} 
    reduces in exponential time 
    to \reachprob n,
       for every $d,n\geq 1$.
\end{theorem}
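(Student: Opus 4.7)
The plan is to invoke Lemma~\ref{lem:balancing} and simulate $B$-\balrun runs by an $n$-dimensional \vass. Let $B$ be the polynomial bound from Lemma~\ref{lem:balancing}, so that $(\V, s, t)$ admits a run iff it admits a $B$-\balrun one. In any such configuration $q(\vr w)$, each block $j \in \setto n$ decomposes uniquely as $\vr w(i, j) = m_j + r(i, j)$ with $m_j := \min_i \vr w(i, j) \geq 0$ and offsets $r(i, j) \in \{0, \ldots, B\}$. Since $\V$'s transitions are $\symg d$-invariant within each block, the semantics depends only on the multiset $M_j = \multiset{r(i, j) : i \in \setto d}$; the particular ordering of offsets is irrelevant.

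I would construct a \parvass n $\V'$ whose states encode pairs $(q, (M_1, \ldots, M_n))$, with each $M_j$ a multiset of $d$ integers from $\{0, \ldots, B\}$ of minimum $0$, and whose $n$ counters track the block minima $(m_1, \ldots, m_n)$. Transitions of $\V'$ arise from each orbit representative $(p, \vr v, q) \in \widetilde T$ of $\V$ as follows: for each source state $(p, (M_j)_j)$ and each permutation $\pi \in \symg d \wr \tg n$ applied to $\vr v$ (modeling how the effect aligns with offsets in each block), compute the per-block counter effect $\min_i (r(i, j) + \pi(\vr v)(i, j))$ (with $r$ a canonical tuple realizing $M_j$) and the resulting offset multiset; include the transition only if all resulting offsets stay in $\{0, \ldots, B\}$, so that $B$-\balrun-ness is invariant along $\V'$-runs. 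Translate $s$ and $t$ into their decompositions; both are $B$-\balrun as $\norm s, \norm t \leq B$.

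Forward simulation of $B$-\balrun runs of $\V$ by $\V'$ is direct: each $B$-\balrun step admits a matching step in $\V'$. The backward direction is the main obstacle: given a run of $\V'$, one must lift it to a run of $\V$, despite the multiset abstraction discarding the arrangement of offsets. The lifting proceeds inductively, starting from any tuple realizing the source multiset; at each step, the $\V'$-transition specifies a target multiset, and one realizes the corresponding arrangement by applying an appropriate permutation from $\symg d \wr \tg n$ to the original orbit representative, which is a valid transition of $\V$ precisely because of its $\symg d$-invariance per block. Nonnegativity of $\V'$-counters combined with nonnegative offsets ensures nonnegativity of all original coordinates. The resulting $\V'$ has $\size Q \cdot \binom{d+B}{d}^n$ states, exponential in the input size (since $B$ is polynomial in $\norm{\V, s, t}$), and the whole construction runs in exponential time, yielding the claimed reduction to \reachprob n.
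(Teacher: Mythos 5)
Your proposal is correct and follows essentially the same route as the paper: both invoke Lemma~\ref{lem:balancing} to restrict to $B$-\balrun runs and then build an exponential-size \parvass{n} that keeps one representative value per block in a counter and the (polynomially bounded) intra-block offsets in the finite state. The only differences are cosmetic refinements on your side — you track block \emph{minima} and abstract offsets into multisets (which also makes nonnegativity of the $\V'$-counters coincide exactly with nonnegativity of all original coordinates), whereas the paper tracks block maxima and stores the full offset tuple in $\B^{nd}$.
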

\begin{proof}
    Let $\V = (Q, T)$ be \svass{$(\symg d \wr \tg n)$} and $s, t$ its configurations.
    Let $B:=P(N)$, where $P$ is the polynomial of
    Lemma \ref{lem:balancing} and $N=\norm{\V,s,t}$.
    Let $\B := \setto B \cup\set{0}$.
    Relying on Lemma \ref{lem:balancing} we construct an \parvass n $\V' = (Q', T')$,
    where $Q' = Q \times \B^{nd}$, that simulates $B$-\balrun runs of $\V$.
    Specifically, every $j$th block of $d$ coordinates of $\V$ is simulated by a single coordinate $j$ of $\V'$ whose
    value is invariantly equal to the smallest value in the $j$th block.
    The simulation is possible due to the auxiliary component $\B^{nd}$ of states that
    keeps the difference between the value of every coordinate $(i,j)$ of $\V$, 
    and the smallest value inside the $j$th block (to which $(i,j)$ belongs).
    The transitions $T'$ of $\V'$ simulate transitions of $T$, and accordingly update the auxiliary component of
    states. Formally, for a vector $\vr v \in \Z^n$ let $\overline{\vr v} \in \Z^{nd}$ denotes 
    a vector such that $\overline{\vr v}(i,j) = \vr v(j)$ for $i \in [d]$, $j \in [n]$.
    Then
    \[
        T' = \setof{
            ((p, \vr b_p), \vr v, (q, \vr b_q)) \in Q' \times \Z^n \times Q'
        }{
            (p, \overline{\vr v} + \vr b_q - \vr b_p, q) \in T 
        }.   
    \]
    Observe that $T'$ is at most exponential in terms of the norm of $(\V, s, t)$.
    We claim that $\V$ has a $B$-\balrun run $s\trans{} t$ if and only if
    $\V'$ has a run $s' \trans{} t'$, where $s', t'$ are obtained from $s, t$ by computing the auxiliary component
    of states, and replacing the values of each block of coordinates by the smallest value in this block.
    The reduction is thus completed.    
\end{proof}

\section{Data \vass}\label{sec:data}

In this section we apply the approach of the previous section to \emph{data \vass}
\cite{DL09}, an extension of \vass
where every transition carries data from an infinite countable domain 
(e.g., natural numbers $\N$), and transitions can test equality of data values involved.
In consequence, transitions are invariant under all permutations of the data domain.

Data \vass are definable in the setting of symmetric \vass if we relax the
assumption that the set of coordinates is finite.
Using the combination of permutation groups of the previous section, 
we may define $n$-dimensional data \vass as 
\parvass{(\tg n \wr \symg \infty)}, for a combination 
of the trivial group $\tg n$ and a symmetric group $\symg \infty$
of \emph{infinite} countable degree.
The group $\symg \infty$ consists of all permutations of $\N$, 
the data domain, that fix all but finitely many elements of $\N$.
The set $\setto n \times \N$ 
of coordinates of a \parvass{(\tg n \wr \symg \infty)} is thus countable 
infinite.
In order to suitably adapt the model of symmetric \vass to infinitely many
coordinates, we still assume that every transition modifies only 
finitely many coordinates (and preserves all others).
The vector of every transition is thus nonzero on only finitely many coordinates.
Likewise, we assume that the source configuration (and hence also all reachable configurations)
is nonzero on only finitely many coordinates.
Finally, we assume orbit-finiteness of the set of transitions.
The set of transitions of a \parvass{(\tg n \wr \symg \infty)}
is necessarily infinite, due to 
the infinite size of the group $\symg \infty$, but it is finitely representable 
if we omit the zero coordinates in the orbit-representation.

\begin{example} \rm
The data \vass shown in Figure \ref{fig:datavass} has three orbits
of transitions:
\[
    \setof{(p, \ \vr v_a, \ q)}{a\in\N}, \quad 
    \setof{(q, \ \vr u_{ab}, \ r)}{a\neq b\in\N}, \quad 
    \setof{(r, \ \vr w_{ab}, \ q)}{a\neq b\in\N},
\]
where the vectors 
$\vr v_a$, $\vr u_{ab}$ and $\vr w_{ab} : \setto n \times \N \to \Z$ are zero on
all coordinates except the following ones:
\begin{align*}
 & \vr v_a(1, a)  = 2 && \vr u_{ab}(1, a) = \vr u_{ab}(1, b) = 1 && \vr w_{ab}(1, a)  = -2\\
 & \vr v_a(2, a)  = 3 && \vr u_{ab}(2, b) = -1 && \vr w_{ab}(2, b) = 1.
\end{align*}
For arbitrarily chosen data values $a\neq b \in \N$, 
let $\vr x, \vr y : \setto 2 \times \N \to \Z$ be vectors equal to zero on all
coordinates except the following ones: 
\begin{align*}
& \vr x(1, a) = 3 && \vr x(1, b) = 1 && \vr y(1, b) = 1 \\ 
&\vr x(2, a) = 2 && && \vr y(2, a) = 3.
\end{align*}
The configuration $r(\vr x)$ is reachable from $p(\vr 0)$
by firing the two transitions $(p, \vr v_a, q), (q, \vr u_{ba}, r)$, while
the configuration $q(\vr y)$ is not reachable from $p(\vr 0)$
as all configurations $q(\vr y')$ reachable from $p(\vr 0)$
necessarily satisfy the invariant $\sum_{i \in \N} \vr y'(1, i) = 2.$ 
\end{example}

\begin{figure}
\begin{center}    
    \scalebox{0.75}{
    \begin{tikzpicture}[shorten >=1pt,node distance=4cm,on grid,>={Stealth[round]},
        every state/.style={draw=blue!50,thick,fill=blue!20},scale=0.25]
    
      \node[state]          (q_0)                      {$p$};
      \node[state]          (q_1) [right=of q_0] {$q$};
      \node[state]          (q_2) [right=of q_1] {$r$};
    
      \path[->] 
                (q_0) edge              node [above]  {$\vr v_a = (2a, 3a)$} (q_1)
                (q_1) edge [bend left]             node [above]  {$\vr u_{ab} = (a+b, -b)$} (q_2)
                (q_2) edge [bend left]            node [below]  {$\vr w_{ab} = (-2a, b)$} (q_1);
    \end{tikzpicture}
    }
    \caption{
        A 2-dimensional data \vass as a \parvass{(\tg 2 \wr \symg \infty)}. States are 
        $Q = \set{p, q, r}$, and transitions are given by orbit representatives.
        The data values $a\neq b\in \N$ are chosen arbitrarily. 
    }
    \label{fig:datavass}
\end{center}
\end{figure}

The reachability problem for data \vass is a long-standing open problem,
and is known to be decidable only for some restricted subclasses \cite{DL09,GL24,KL24}.
Using the techniques developed in this paper, we can prove decidability in 
yet another subclass of data \vass,
namely for data \vass invariant under independent permutations of data values 
in each of dimensions $1\ldots n$.
In our setting, this subclass is definable as \parvass{(\symg \infty \wr \tg n)}, 
using another combination of 
the symmetric group of infinite degree and the trivial group $\tg n$.

The proof of Lemma \ref{lem:balancing} can be adapted to show that
\parvass{(\symg \infty \wr \tg n)} admit polynomially bounded runs.
Below, a run is called \emph{$B$-bounded} if all configurations occurring in it are
bounded by $B$.
As in every configuration of a \parvass{(\symg \infty \wr \tg n)} 
there is infinitely many 
coordinates which are zero, we may use these coordinates in the 
proof of Lemma \ref{lem:balancing} to produce a run that is not only balanced,
but also bounded:

\begin{lemma}\label{lem:inftybalancing}
    There is 
    a polynomial $P$ such that whenever a \parvass{(\symg \infty \wr \tg n)}
    $\V$ admits a run from $s$ to $t$, then it also admits a 
    $P(\norm{\V, s,t})$-bounded run from $s$ to $t$.
\end{lemma}

\begin{theorem} \label{thm:inftysmall}
    \reachprob{(\symg \infty \wr \tg n)} is decidable, for every $n\in\N$.
\end{theorem}
\begin{proof}
    We provide a reduction from \reachprob{(\symg \infty \wr \tg n)} to the 
    (plain) \vass 
    reachability problem.
    
    Consider a given \parvass{(\symg \infty \wr \tg n)} $\V = (Q, T)$ and two 
    configurations $s,t$. 
    Let $B = P(\norm{\V, s,t})$, where $P$ is the polynomial of Lemma 
    \ref{lem:inftybalancing}. We thus know that if $\V$ has  a run 
    from $s$ to $t$ in $\V$, then it also has a $B$-bounded such run.
    
    For simplicity, assume that vectors in $s$ and $t$ are $\vr 0$.
    (Otherwise we would use extra counters in the constructed \vass to simulate
    those coordinates $S \subset_{fin} \N \times \setto n$ 
    on which the configuration $s$ or $t$ is nonzero.
    For simplicity, we prefer to omit these inessential details.)
    
    Let $d = n B$.
    We construct a \parvass {d} $\overline \V=(Q, \overline T)$ with the same control states as $\V$,
    and two its configurations $\overline s, \overline t$
    such that there is a $B$-bounded run from $s$ to $t$ in $\V$ 
    if and only if there is a run from $\overline s$ to $\overline t$ in $\overline{\V}$.
    Thus $\overline{\V}$ is designed to simulate only $B$-bounded runs of $\V$.
    The main idea of construction of $\overline{\V}$
    is to represent a configuration of $\V$ bounded by $B$
    by counting, separately for every $i\in\setto n$ and $b\in \setto B$,
    the number of coordinates $(j, i) \in \N\times\setto n$ equal to 
    $b$.
    More precisely, a configuration $p(\vr v)$ of $\V$ bounded by $B$ is 
    represented by the configuration $p(\overline{\vr v})$ of $\overline{\V}$,
    where:
    \[
        \overline{\vr v}((i-1)B + b) = 
            \size{\setof{(j, i) \in \N\times\setto n}{\vr v(j, i) = b}}
    \]
    for every $i\in\setto n$ and $b\in\setto B$.
    We define $\overline{\V} := (Q, \overline T)$, where 
    the transitions $\overline T$ are defined as 
    \[
        \overline{T} = 
        \setof{
            (p, \overline{\vr v + \vr w} - \overline{\vr v}, q)
        }{
            (p, \vr w, q) \in T
        },
    \]
    where vectors $\overline{\vr v + \vr w}$ and $\overline{\vr v}$ are
    implicitly assumed to be bounded by $B$.
    As $T$ is assumed to be orbit-finite,
    the set $\overline T$ is finite (and exponential in terms of the 
    representation of $\V$).
    
    Correctness of the reduction follows by construction: $\V$ has 
    a $B$-bounded run from $s$ to $t$ 
    if and only if $\overline \V$ has a run from $\overline s$ to $\overline t$.
\end{proof}

We remark that the decidability works in the uniform setting
when $n$ is part of the input.

\section{Other groups}
\label{sec:other}

In this section we follow up Section \ref{sec:symg} by showing that alternating groups admit similar
complexity drop as symmetric groups.
We also demonstrate that sole transitivity is not enough to achieve any significant
complexity drop,
namely the complexity of the reachability problem in case of cyclic groups is comparable to 
general \vass.

\para{Alternating group}

Let $\altg d \leq \symg d$ denote the alternating 
group of degree $d$, i.e., the group of all even
permutations of $\setto d$.
The group is generated by \emph{3-cycles}, permutations of the form
$
\sigma_{ijk} \ = \Big( {\small \begin{array}{c} i \ \ j \ \ k \\ j \  \ k \ \ i
\end{array}} \Big),
$
for pairwise distinct $i, j, k \in\setto d$.
When $d=2$ the alternating group is trivial, \parvass {\altg 2} = \parvass 2, and the reachability
problem is \pspace-complete \cite{Blondin15}.
In the sequel we thus consider degrees $d\geq 3$.

\begin{lemma}\label{lem:ad_bal}
    $\altg d$ is \fair, for $d \geq 3$.
\end{lemma}
The argument elaborates on the proof of Lemma \ref{lem:sd_bal}, and proceeds in two steps.
First, assuming a run $\pi$ with one sufficiently large coordinate at the end,
by applying a permutation from $\altg d$ to a properly chosen suffix of $\pi$
we construct a run with \emph{two} large coordinates at the end.
Second, we further modify the run to increase all coordinates above a required threshold, similarly
to the proof of Lemma \ref{lem:sd_bal}.

\begin{proof}
    Consider an \svass{$\altg d$} $\V = (Q, T)$ and a configuration $s = q(\vr v)$.
    Let $N := \norm{\V}$ and $S := \norm{\V, s}$.
    Clearly, $S \geq N$.
    Let $R\in\N$ be any constant, and
    $B := 2(N+1)(S+R)$.
    In order to prove that $\altg d$ is \fair for the polynomial $P(x) = 3d \cdot 2(x+1)+1$, we 
    assume that $\V$ has a run $\pi \colon q(\vr v) \trans{} p(\vr w)$ with $\norm{\vr w} > B\cdot 3d + N$,
    and construct a run $\pi'' \colon q(\vr v) \trans{} p''(\vr w'')$ whose target vector satisfies 
    $\vr w'' \geq \const{S+R+1}$. 
    
\smallskip

    \Wlog~we assume that $\vr w(d) > B\cdot 3d + N$.
    We proceed in two steps.
    In the first step we transform $\pi$ into a run $\pi' \colon q(\vr v) \to p'(\vr w')$,
    where $\vr w'(d) > B\cdot d$ and $\vr w'(i) > B\cdot d$ for some $i \neq d$.
    Suppose $\pi$ does not satisfy the required condition, namely $\vr w(i) \leq B\cdot d$ for $i\in \setto{d-1}$.
    We split $\pi$ into:
    \[
    q(\vr v) \trans{\rho} q'(\vr v') \qquad\qquad
    q'(\vr v') \trans{\tau} p(\vr w),
    \]
    where $\tau$ is a suffix of $\pi$ (not necessarily unique)
    in which the value of coordinate $d$
    is always greater than $B \cdot 2d$, but $\vr v'(d)$ is the lowest possible.
    Hence, the value of coordinate $d$ never drops below $\vr v'(d)$ along $\tau$,
    and $\vr v'(d) \leq B\cdot 2d + N$, as otherwise we could choose lower $\vr v'(d)$.
    Let $i, j \in \setto{d-1}$ be any coordinates with $\vr v'(i) \leq \vr v'(j)$.
    If $\vr v'(j) > B\cdot d$ we take $\pi' = \rho$.
    Otherwise, we take
    \[
    \pi'  \ := \ \rho ; \ \sigma_{ijd}(\tau),
    \]
    i.e., for each step in $\tau$ we move the effect on coordinate $i$ to $j$,
    the effect on coordinate $j$ to $d$, and the effect on coordinate $d$ to  $i$.
    We need however to argue that $\pi'$ is indeed a run, i.e., it never drops below 0 on coordinates
    $i, j, d$.
    As $\vr v'(j) \leq B\cdot d$, 
    the effect of every prefix of $\tau$ on coordinate $j$
    is $\geq -B\cdot d$.
    Therefore, since $\vr v'(d) > B\cdot 2d$,
    the coordinate $d$ is always greater than $B \cdot 2d - B\cdot d = B\cdot d$ 
    along $\sigma_{ijd}(\tau)$.
    The coordinate $i$  never drops below $\vr v'(i)$ along $\sigma_{ijd}(\tau)$,
    because we have chosen $\tau$ so that the coordinate $d$ never drops below $\vr v'(d)$ along $\tau$. 
    Finally, coordinate $j$ never drops below $0$ along $\sigma_{ijd}(\tau)$ as $\vr v'(i) \leq \vr v'(j)$.
    We have thus demonstrated that $\pi'$ is a run.

    Let $p(\vr w')$ be the final configuration of $\pi'$.
    Since $\vr v'(d) \leq B\cdot 2d + N$ and $\vr w(d) > B\cdot 3d + N$,
    the coordinate $d$ is increased by $\tau$ by more than $B\cdot d$, 
    and hence the coordinate $i$ is increased by $\sigma_{ijd}(\tau)$ by more than $B\cdot d$, 
    which implies $\vr w'(i) > B\cdot d$.
    Summing up, we have $\vr w'(i), \vr w'(d) > B\cdot d$, as required.

\smallskip

    In the second step we will modify the run $\pi'$
    similarly as in the proof of Lemma \ref{lem:sd_bal}.
    \Wlog~assume that $\vr w'(d-1), \vr w'(d) > B \cdot d$.
    For $i = 1, \ldots, d-2$ we inductively construct runs $\pi_i \colon q(\vr v) \to p'(\vr w_i)$ satisfying
    \begin{align}\label{eq:awi}
    \vr w_i(d), \vr w_i(d-1) > B \cdot (d-i) \qquad \qquad
    \vr w_i(j) > S + R \ \ (\text{for all }j \geq i).
    \end{align}
    We start with $\pi_0 := \pi'$.
    Assuming $\pi_{i-1}$ has already been constructed, we construct $\pi_{i}$ as follows.
    If $\vr w_{i-1}(i) > S + R$ then we take $\pi_{i} := \pi_{i-1}$.
    Otherwise, we split $\pi_{i-1}$ into 
    $\pi_{i-1} \ = \ \alpha ; \beta,$
    \[
    q(\vr v) \trans{\alpha} r(\vr u)
    \qquad\qquad
    r(\vr u) \trans{\beta} p(\vr w_{i-1}),
    \]
    where $\beta$ is the longest suffix of the run
    in which the values of the coordinates $d$ and $d-1$ are both greater than $B\cdot (d-i)$.
    One of $\vr u(d-1)$, $\vr u(d)$ is $<B\cdot(d-i)+N$, as otherwise $\beta$ would be longer.
    \Wlog~$\vr u(d) < B\cdot(d-i)+N$.
    Let $\Delta = \vr w_{i-1} - \vr u$ be the difference between the target and source of $\beta$.
    We observe that the value of coordinate $d$
    is increased by $\beta$ by at least
    \[
     \Delta(d) \ > \ B\cdot(d-i+1) - B\cdot(d-i)-N = 2(N+1)(S+R) - N \geq 2N(S+R) + S + R,
    \]
    while the value of coordinate $i$ is increased by at most 
    \[
    \Delta(i) \ \leq \ \vr w_{i-1}(i) \ \leq \ S+R.
    \]
    Therefore, $\Delta(d) - \Delta(i) \geq S+R+1$.
    In consequence, $\beta$ contains some $k \leq S+R+1$ steps
    \[
    q_1(\vr u_1) \trans{t_1} q'_1(\vr u'_1) \quad \ldots \quad
    q_k(\vr u_k) \trans{t_k} q'_k(\vr u'_k)
    \]
    induced by transitions $t_1 = (q_1, \vr e_1, q'_1), \ldots, t_k = (q_k, \vr e_k, q'_k)$
    with effects 
    \[
    \vr e_1 \ = \ \vr u'_1 - \vr u_1 \qquad \ldots \qquad 
    \vr e_{k} \ = \ \vr u'_k - \vr u_k
    \]
    such that $\vr e_\ell(d) > \vr e_\ell(i)$ for $\ell \in \setto {k}$, that is, each of the steps increases 
    the difference between 
    the values of coordinate $d$ and $i$, and moreover
    \begin{align} \label{eq:aNR1}
    \vr e_1(d) + \ldots + \vr e_k(d) \ \geq \ \vr e_1(i) + \ldots + \vr e_k(i) + S+R+1,
    \end{align}
    that is, the steps jointly increase the difference between the values of coordinate $d$ and $i$ 
    by ay least $S+R+1$.
    We define a run $\pi_{i} \colon q(\vr v) \runarrow p'(\vr w_{i})$
    by replacing each transition $t_\ell$ in $\pi_{i-1}$ with the transition $\sigma_{i(d-1)d}(t_\ell)$.
    As one step changes the difference between coordinates at most by $2N$,
    the coordinate $d$ never drops below 
    \[
    \vr u(d) - 2N\cdot k  \ > \  B\cdot(d-i) - 2N \cdot k \ \geq \ B\cdot(d-i) - 2N(S+R+1) \ > \ 0
    \] 
    in $\pi_{i}$, 
    and $\vr w_{i}(d) > B\cdot(d-i+1) - 2N\cdot k \geq B\cdot(d-i+1) - 2N(S+R+1) > B\cdot(d-i)$.
    Similarly, we prove that the coordinate $d-1$ never drops below $0$ 
    and $\vr w_{i}(d-1) > B\cdot(d-i)$.
    Furthermore, due to the inequality \eqref{eq:aNR1}, $\vr w_{i}(i) > S + R$,
    because the coordinate $i$ was increased
    by at least $S+R+1$.
    Other coordinates are not affected, i.e., their values are the same in $\pi_{i-1}$ and $\pi_{i}$.
    Therefore, the run $\pi_{i}$ satisfies the inequalities \eqref{eq:awi}, as required.

    Finally, we arrive at a run $\pi'' = \pi_{d-2} \colon q(\vr v) \runarrow p'(\vr w_{d-2})$
    with $\vr w_{d-2}(i) \geq S+R+1$ for every $i \in \setto d$, and this completes the proof.
\end{proof}

\begin{theorem}\label{thm:An}
    \reachprob {\altg d} is \pspace-complete, for every $d \geq 3$.
\end{theorem}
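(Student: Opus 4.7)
The plan is to mirror exactly the argument used for Theorem \ref{thm:Sn}, since all the structural work has already been done in earlier lemmas.

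For the upper bound, I would invoke Lemma \ref{lem:pspace}, which requires verifying that $\altg d$ satisfies the two hypotheses of that lemma: transitivity and fairness. Transitivity is the easy step: for $d \geq 3$ and any two distinct $i, j \in \setto d$, one can pick any third coordinate $k \in \setto d \setminus \{i, j\}$ and use the $3$-cycle $\sigma_{ijk} \in \altg d$, which maps $i$ to $j$; together with the identity this shows that the action of $\altg d$ on $\setto d$ is transitive. Fairness is precisely the content of Lemma \ref{lem:ad_bal}. Applying Lemma \ref{lem:pspace} then yields \reachprob{\altg d} $\in$ \pspace.

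For the lower bound, since $\altg d \leq \symg d$ and $d \geq 3 \geq 2$, Lemma \ref{lem:hardness} applies directly and gives \pspace-hardness of \reachprob{\altg d}.

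There is no genuine obstacle here: the theorem is a direct combination of the two general tools. The only point worth stating explicitly is the transitivity verification, since Lemma \ref{lem:ad_bal} itself only asserts fairness and the transitivity hypothesis of Lemma \ref{lem:pspace} must be checked separately. Everything else follows by quoting the lemmas.
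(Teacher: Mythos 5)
Your proposal is correct and matches the paper's proof exactly: both combine Lemma \ref{lem:pspace} with Lemma \ref{lem:ad_bal} for the upper bound and Lemma \ref{lem:hardness} for the lower bound. Your explicit verification of transitivity via $3$-cycles is a sound (and welcome) addition that the paper leaves implicit.
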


\begin{proof}
Membership in \pspace follows immediately by Lemmas \ref{lem:pspace} and \ref{lem:ad_bal},
while \pspace-hardness follows by Lemma \ref{lem:hardness}.
\end{proof}

Similarly to Theorem \ref{thm:Sn}, the \pspace upper bound holds even in the uniform\
setting, where the degree $d$ is part of the input.
We note that Theorem \ref{thm:An} subsumes Theorem \ref{thm:Sn}, as 
\parvass{\symg d} reduce polynomially to \parvass{\altg d}.
Indeed, any \parvass{\symg d} $\V = (Q, T)$ is automatically an \parvass{\altg d} 
(as $\altg d \leq \symg d$) whose representation is obtained from a representation
of $\V$ by adding to the set of transitions $T$
the image of each transition under every single transposition.
Thus the size of the representation increases only polynomially.

\para{Cyclic groups}

Let $\cycg n$ denote the cyclic group of degree $n$, i.e., the group 
$\cycg n = \setof{\sigma^i}{i\in\{0, 1, \ldots, n-1\}}$
generated by the single cyclic shift
$
\sigma \ = \Big( {\small \begin{array}{c} 1 \ \ 2 \ \ \ldots \ \ n \\ 2 \  \ 3 \ \  \ldots \ \ 1
\end{array}} \Big).
$

\begin{theorem}\label{thm:zn}
    \reachprob{n}
    reduces in polynomial time to \reachprob{(\cycg{2n+8})},  for every $n\geq 1$.
\end{theorem}
In the proof, an \parvass n 
is simulated using $n$ consecutive counters of a \parvass {\cycg {2n+8}}.
The remaining $n+8$ counters are used
to enforce that only one transition from each $\cycg {2n+8}$-orbit can be applied,
which assures faithfulness of the simulation.
\begin{proof}
    Given an \parvass n $\V = (Q, T)$, we construct
    a \parvass{(Z_{2n+8})} $\V' = (Q', T')$ 
    with states $Q' = Q \cup \setof{q'}{q \in Q}$
    and transitions $T'$.
    The idea is to simulate 
    a configuration $q(\vr v)$ of $\V$  
    by the configuration $\overline {q(\vr v)} = q(\overline {\vr v})$ of $\V'$,
    where
    \begin{align*}
    \overline {\vr v}(2) \ = \ & \overline {\vr v}(n+3) \ = \  1 &
    \overline{\vr v} (i+2) \ = \ & \vr v(i) \qquad (\text{for } i \in [n])\; \\
    && \overline {\vr v}(i)  \ = \ & 0 \qquad\quad (\text{for other } i).
    \end{align*}
    We define the orbit representatives $\widetilde T\subseteq T'$ as follows.
    For every $t = (p, \vr v, q) \in T$
    we put to $\widetilde T$ a transition $t' = (p, \vr v', q')$, where
    \begin{align*}
    \vr v'(2) \ = \ & \vr v'(n+3) \ = \  -1 &
    \vr v'(i+2) \ = \ & \vr v(i) \qquad (\text{for } i \in [n]) \\
    \vr v'(1) \ =  \ & \vr v'(n+4) \ = \ 1 &
    \vr v'(i) \  = \ & 0 \qquad\quad (\text{for other } i).
    \end{align*}
    %
    Furthermore,
    for every $q \in Q$
    we put to $\widetilde T$ a transition $u_q = (q', \vr v'', q)$, where
    \begin{align*}
    \vr v''(2) \ = \ & \vr v''(n+3) \ = \  1 \\
    \vr v''(1) \ =  \ & \vr v''(n+4) \ = \ -1 &
    \vr v''(i) \  = \ & 0 \qquad\quad (\text{for other } i).
    \end{align*}
    
     We demonstrate correctness of the reduction.
     In one direction,
     every step $c \trans{t} c'$ in $\V$, where $t = (p, \vr v, q) \in T$,
     is simulated in $\V'$ by a two-transition run 
     \[
     \overline c \trans{\alpha_t} \overline{c'}
     \qquad \text{ where }
     \alpha_t \ = \ t'; \ u_q,
     \] 
     and therefore $c\trans{} c'$ in $\V$ implies $\overline c \trans{} \overline{c'}$ in $\V'$. 
    Conversely, suppose $\overline c \trans{} \overline{c'}$ in $\V'$.
    By the choice of states in the definition of  transitions $T'$, 
    the run may be split into two-transition segments of the form
    \[
    b\trans{\alpha} b'
    \qquad \text{ where }
    \alpha \ = \ \sigma^k(t'); \ \sigma^\ell(u_q),
    \]
    for some $t = (p, \vr v, q) \in T$.
    Suppose $b = p(\overline {\vr w})$.
    Since ${\vr v'}(2) = \vr v'(n+3) = -1$, i.e.,
    $\vr v'$ contains a pair of negative numbers separated by window of length $n$, 
    and $\overline{\vr w} = 0$ for all $i \notin \{2, \ldots, n+3\}$, 
    we deduce that $k=0$, i.e.,
    $\sigma^k(t') = t' \in \widetilde T$.
    Likewise, we deduce that $\ell = 0$, i.e., $\sigma^\ell(u_q) = u_q \in \widetilde T$.
    Therefore, $\alpha = \alpha_t$, and hence $b' = q(\overline{\vr w + \vr v})$.
    Reasoning in this way for all two-transition segments, we deduce 
    that $\overline c \trans{} \overline{c'}$ in $\V'$ implies $c\trans{} c'$ in $\V$, as required.
\end{proof}


\section{Final remarks}
\label{sec:final}

In this paper we have investigated the impact of symmetry in \vass on the complexity
of the reachability problem.
The permutation groups considered in this paper split clearly into two groups.
On one side there are
`easy' groups, which includes both symmetric and alternating groups featuring \pspace-completeness,
but also the combinations $\wrprod{\symg d}{\tg n}$ of symmetric groups and trivial ones,
where the complexity is independent of the degree $d$ of the symmetric group
(dependence on the degree $n$ of the trivial group is unavoidable).
On the other side there are
`hard' groups, which includes both trivial and cyclic groups, which not differ significantly in complexity, 
but also the dual combinations $\wrprod{\tg n}{\symg d}$ of symmetric groups and trivial ones,
where the complexity is dependent on both $d$ and $n$.
We end up with a research question left for further research:
can one classify \emph{all} finite permutation groups into `easy' and `hard' ones?

\smallskip

The model of data \vass is definable in our setting as \parvass{(\wrprod{\tg n}{\symg \infty})}.
Therefore,
`hardness' of $\wrprod{\tg n}{\symg d}$ may suggest that the reachability problem
for data \vass is harder than for plain \vass.
On the other hand, we prove that the reachability problem for the subclass of data \vass
invariant under independent
data permutations in each dimension
(definable as  \parvass{(\wrprod{\symg \infty}{\tg n})} in our setting)
is not significantly harder than for plain \vass
(and, in particular, decidable).
This establishes an interesting connection between symmetric \vass and data \vass.

\bibliographystyle{elsarticle-num} 
\bibliography{bib,pn-bib}






\end{document}